\documentclass{article}
\usepackage[
    a4paper,
    margin=1in,
]{geometry}

\usepackage{amsmath}
\usepackage{amsthm}
\usepackage{amssymb}

\usepackage{graphicx}
\usepackage{float}
\usepackage{booktabs}
\usepackage{array}
\usepackage{tabularx}
\usepackage{subcaption}
\usepackage[table]{xcolor}
\usepackage{tikz}
\usepackage{pgfplots}
\pgfplotsset{compat=1.18}
\usepackage{pgfplotstable}
\usepackage{xcolor}
\usetikzlibrary{arrows.meta,positioning}
\usepackage{pgfplots}
\pgfplotsset{compat=1.18}
\usepgfplotslibrary{fillbetween}
\usetikzlibrary{intersections}
\usepackage{xcolor}
 
\definecolor{catblue}{RGB}{55,138,221}
\definecolor{labelgray}{RGB}{110,110,110}
\usepackage{algorithm}
\usepackage{algpseudocode}

\usepackage{enumitem}

\usepackage{hyperref}
\hypersetup{
    colorlinks=true,
    linkcolor=blue,
    citecolor=blue,
    urlcolor=blue
}

\usepackage{siunitx}
\usepackage{xcolor}

\newtheorem{theorem}{Theorem}[section]
\newtheorem{proposition}[theorem]{Proposition}
\newtheorem{corollary}[theorem]{Corollary}

\newtheorem{assumption}[theorem]{Assumption}

\theoremstyle{remark}
\newtheorem{remark}[theorem]{Remark}

\newcommand{\Pbase}{P_{\mathrm{base}}}
\newcommand{\Chat}{\widehat{C}}
\newcommand{\xstd}{\tilde{\mathbf{x}}}

\definecolor{insuredcol}{RGB}{240,149,123}
\definecolor{uninsuredcol}{RGB}{226,75,74}
\definecolor{catbondcol}{RGB}{55,138,221}
\definecolor{labelgray}{RGB}{100,100,100}
\title{CAT Bond Pricing with Kolmogorov--Arnold Networks}
\author{Sean Seow Cheng Hong\thanks{This work is part of an undergraduate research project (UROPS) in the Department of Mathematics, National University of Singapore, supervised by Assistant Professor Julian Sester.}}
\date{\today}

\begin{document}
\maketitle

\begin{abstract}
We study the approximation of CAT bond prices under a compound Poisson
loss model with lognormal severities using Kolmogorov--Arnold Networks
(KANs). Building on a baseline-plus-residual learning pipeline, we
train a KAN on the deviation from a closed-form lognormal baseline
price and extract an interpretable symbolic pricing formula. The
extracted formula achieves an average relative pricing error of
$0.483\%$ on a fully disjoint holdout sample of $90{,}000$ simulated
prices. In contrast to purely empirical modelling, we also analyse
structural properties of the true CAT bond pricing map, including
monotonicity with respect to the catastrophe arrival intensity
$\lambda$, the initial short rate $r_0$, and the trigger threshold
$D$. We derive sufficient conditions on KAN edge functions that
guarantee these monotonicities are preserved by the learned model, and
formulate a monotonicity-constrained training objective with a
convergence guarantee. Our results suggest that symbolic KAN
surrogates provide a practical compromise between accuracy,
computational speed, and interpretability for CAT bond valuation.
\end{abstract}

\tableofcontents
\newpage

\section{Introduction}
\label{sec:intro}

\subsection{Background and Motivation}
\label{subsec:background}

A central tension in applied machine learning is the trade-off between
accuracy and interpretability. Neural networks are powerful function
approximators, but the representations they build are opaque: once
training is complete, there is generally no way to extract a
human-readable formula from the model. Kolmogorov--Arnold Networks
(KANs), introduced by Liu et al.\ \cite{KANpaper}, take a different
approach. Rather than applying fixed nonlinear activations at each
node, a KAN places \emph{learnable univariate functions} on the edges
of the network. This makes it far easier to inspect what the network
has learned and, in favourable cases, to extract a compact closed-form
expression. The promise is not just accurate prediction, but a
transparent formula that a user can inspect, stress-test, and
understand.

This paper applies that idea to the pricing of catastrophe (CAT) bonds.
CAT bonds are fixed-income instruments that transfer insurance risk ---
from earthquakes, hurricanes, and floods --- directly to capital-market
investors. If a qualifying disaster occurs before maturity, some or all
of the principal is redirected to the bond's sponsor to cover losses. If no disaster occurs, investors receive full repayment together with
above-market coupons that compensate them for bearing the tail risk.
According to Artemis \cite{Artemis2025}, the outstanding CAT bond market
reached a record \$61.3 billion at the end of 2025, reflecting 24\% growth from
the prior year and average annual growth of around 8\% over the past
decade. As the frequency and severity of natural disasters continue to
rise, fast and accurate pricing of these instruments is increasingly
important.
The standard pricing framework, due to Baryshnikov et al.\
\cite{Baryshnikov1998} and refined by Burnecki and Kukla
\cite{BurneckiKukla}, models aggregate losses as a compound Poisson
process under a stochastic interest rate environment and computes bond
prices via Monte Carlo simulation. This approach is accurate, but slow, as re-evaluating prices across many parameter configurations required in risk management or sensitivity analysis is computationally expensive. Sester and Xu \cite{CATbondPaper} showed
that a feedforward neural network can serve as a fast, accurate
surrogate for this pricing map, achieving order-of-magnitude speedups.
The limitation is that such networks are black boxes: they produce
predictions without explanation.

Our goal is to show that the KAN can achieve competitive accuracy on CAT bond pricing while going one step further: producing a symbolic approximation of the pricing function that is compact, interpretable, and mathematically tractable.

\subsection{Related Literature}
\label{subsec:literature}

This work sits at the intersection of three bodies of literature: CAT
bond pricing, neural network models for financial
derivatives, and Kolmogorov--Arnold Networks.

\paragraph{CAT bond pricing.}
The mathematical pricing of CAT bonds under a compound Poisson
framework was pioneered by Baryshnikov et al.\ \cite{Baryshnikov1998},
who introduced a no-arbitrage approach using doubly stochastic Poisson
processes. Burnecki and Kukla \cite{BurneckiKukla} then derived
explicit pricing formulae for zero-coupon and coupon CAT bonds and
calibrated the model to real catastrophe loss data.
H\"{a}rdle and L\'{o}pez \cite{HardleLopez2010} applied this approach
to the calibration of CAT bonds for Mexican earthquake risk.
Later extensions include stochastic interest rate dynamics via the CIR
model \cite{NowakRomaniuk2018}, joint analysis of multiple risk drivers
such as loss severity, claim intensity, and trigger thresholds
\cite{MaMa2013}, and a risk-neutral framework with time-varying
catastrophe intensities \cite{Hofer2021}. A systematic survey of
compound Poisson approaches to CAT bond pricing is provided by
Sukono et al.\ \cite{Sukono2022}. A common limitation across this
literature is computational. This is because when the loss distribution does not yield
a closed-form survival probability, prices must be evaluated by Monte
Carlo simulation, which motivates the surrogate modelling approach
taken here.

\paragraph{Neural network surrogates for financial derivatives.}
Using neural networks as fast pricing surrogates is well established
in the option pricing literature. Early work by Garcia and Gen\c{c}ay
\cite{GarciaGencay2000} and Anders et al.\ \cite{Anders2001} showed
that feedforward networks can approximate option pricing functions at
near-instantaneous speed, with theoretical justification provided by
the universal approximation theorem \cite{Hornik1989}. More recently,
deep learning methods have been applied across a wider range of
derivatives and models, including neural SDE approaches
\cite{FanLiu2024} and finance-informed architectures that embed
no-arbitrage constraints directly into training \cite{FINN2024}. In
the CAT bond setting specifically, Sester and Xu \cite{CATbondPaper}
demonstrated that feedforward networks trained on Monte Carlo simulated
prices achieve high accuracy, and proved universal approximation results
for the CAT bond pricing map on compact domains. This paper builds
directly on their framework, replacing the feedforward network with a
KAN to obtain an interpretable symbolic formula.

\paragraph{Kolmogorov--Arnold Networks.}
KANs were introduced by Liu et al.\ \cite{KANpaper} as a
practically motivated reinterpretation of the Kolmogorov--Arnold
representation theorem. Since their introduction, they have been
applied in scientific machine learning settings where interpretability
is a priority, including physics-informed modelling and symbolic
regression tasks. However, their use in financial mathematics and, in
particular, derivative pricing is largely unexplored. This paper therefore attempts to apply KANs to CAT bond pricing.

\subsection{Kolmogorov--Arnold Networks}
\label{subsec:kans}

\subsubsection{From fixed activations to learnable edge functions}

A standard feedforward neural network applies a sequence of linear
maps interleaved with fixed nonlinear activations. For example, a
two-layer network with ReLU activations computes
\[
    f(\mathbf{x}) = W_2\,\sigma(W_1\mathbf{x} + b_1) + b_2,
\]
where $W_1, W_2$ are learned weight matrices, $b_1, b_2$ are learned
biases, and $\sigma$ is the elementwise ReLU. The network adapts to
the data by adjusting the weights and biases, but the shape of the
nonlinearity --- the shape of the ReLU --- is fixed throughout.

A KAN \cite{KANpaper} changes this design. Instead of placing fixed
nonlinearities on nodes, a KAN places \emph{learnable univariate
functions} on the edges, and sums at the nodes. A two-layer KAN with
input $\mathbf{x} = (x_1,\dots,x_d)$ computes
\[
    f(\mathbf{x})
    =
    \sum_{j=1}^{h}
    \phi_{2,j}\!\left(
        \sum_{i=1}^{d} \phi_{1,ij}(x_i)
    \right),
\]
where $\phi_{1,ij}:\mathbb{R}\to\mathbb{R}$ and
$\phi_{2,j}:\mathbb{R}\to\mathbb{R}$ are all learned during training.
In practice, each edge function is parameterised as a B-spline — a piecewise polynomial whose shape is learned during training. The grid size and spline order are hyperparameters chosen before training.

\begin{figure}[H]
\centering
\begin{tikzpicture}[
    >=Latex,
    x=1cm, y=1cm,
    node_nn/.style={
        circle,
        draw=black!70,
        fill=orange!10,
        minimum size=10mm,
        inner sep=0pt,
        line width=0.55pt,
        font=\small
    },
    node_kan/.style={
        circle,
        draw=black!70,
        fill=blue!8,
        minimum size=10mm,
        inner sep=0pt,
        line width=0.55pt,
        font=\small
    },
    edge_nn/.style={
        draw=black!50,
        line width=0.7pt,
        ->
    },
    edge_kan/.style={
        draw=blue!60!black,
        line width=0.9pt,
        ->
    },
    layer_label/.style={
        font=\small,
        text=black!55
    },
    title_style/.style={
        font=\normalsize\bfseries
    }
]


\node[title_style] at (-3, 7.2) {Feedforward NN};

\node[node_nn] (nn_x1) at (-4.5, 0) {$x_1$};
\node[node_nn] (nn_x2) at (-3.0, 0) {$x_2$};
\node[node_nn] (nn_x3) at (-1.5, 0) {$x_3$};

\node[node_nn] (nn_h1) at (-4.5, 3) {$h_1$};
\node[node_nn] (nn_h2) at (-3.0, 3) {$h_2$};
\node[node_nn] (nn_h3) at (-1.5, 3) {$h_3$};

\node[node_nn] (nn_out) at (-3.0, 6) {$f$};

\node[layer_label, anchor=east] at (-5.6, 0) {Input};
\node[layer_label, anchor=east] at (-5.6, 3) {Hidden};
\node[layer_label, anchor=east] at (-5.6, 6) {Output};

\foreach \inp in {nn_x1, nn_x2, nn_x3} {
    \foreach \hid in {nn_h1, nn_h2, nn_h3} {
        \draw[edge_nn] (\inp) -- (\hid);
    }
}

\foreach \hid in {nn_h1, nn_h2, nn_h3} {
    \draw[edge_nn] (\hid) -- (nn_out);
}

\node[font=\scriptsize, text=orange!70!black]
    at (-4.5, 3) {};
\foreach \hid/\lbl in {nn_h1/$\sigma$, nn_h2/$\sigma$, nn_h3/$\sigma$}{
    \node[font=\scriptsize, text=orange!80!black,
          yshift=-7pt] at (\hid) {\lbl};
}
\node[font=\scriptsize, text=orange!80!black,
      yshift=-7pt] at (nn_out) {$\sigma$};

\node[font=\scriptsize, text=black!50,
      rotate=55] at (-3.95, 1.45) {$w_{ij}$};

\node[font=\footnotesize, align=center, text=black!60]
    at (-3.0, -1.0)
    {Fixed $\sigma$ on nodes\\learned $w_{ij}$ on edges};


\node[title_style] at (3, 7.2) {KAN};

\node[node_kan] (kan_x1) at (1.5, 0) {$x_1$};
\node[node_kan] (kan_x2) at (3.0, 0) {$x_2$};
\node[node_kan] (kan_x3) at (4.5, 0) {$x_3$};

\node[node_kan] (kan_h1) at (1.5, 3) {$\Sigma$};
\node[node_kan] (kan_h2) at (3.0, 3) {$\Sigma$};
\node[node_kan] (kan_h3) at (4.5, 3) {$\Sigma$};

\node[node_kan] (kan_out) at (3.0, 6) {$\Sigma$};

\node[layer_label, anchor=west] at (5.6, 0) {Input};
\node[layer_label, anchor=west] at (5.6, 3) {Hidden};
\node[layer_label, anchor=west] at (5.6, 6) {Output};

\foreach \inp in {kan_x1, kan_x2, kan_x3} {
    \foreach \hid in {kan_h1, kan_h2, kan_h3} {
        \draw[edge_kan] (\inp) -- (\hid);
    }
}

\foreach \hid in {kan_h1, kan_h2, kan_h3} {
    \draw[edge_kan] (\hid) -- (kan_out);
}

\node[font=\scriptsize, text=blue!70!black,
      rotate=55] at (2.5, 1.45) {$\phi_e(\cdot)$};
\node[font=\scriptsize, text=blue!70!black,
      rotate=0] at (3.0, 4.45) {$\phi_e(\cdot)$};

\node[font=\footnotesize, align=center, text=black!60]
    at (3.0, -1.0)
    {Fixed $\Sigma$ on nodes\\learned $\phi_e(\cdot)$ on edges};

\draw[dashed, black!25, line width=0.8pt] (0, -1.6) -- (0, 7.5);

\end{tikzpicture}
\caption{Comparison of a feedforward neural network (left) and a KAN
(right), both with architecture $[3,3,1]$. In the feedforward NN,
nodes apply a fixed nonlinear activation $\sigma$ and edges carry
scalar weights $w_{ij}$. In the KAN, nodes perform simple summation
$\Sigma$ and edges carry learnable univariate functions $\phi_e(\cdot)$.
The key difference is \emph{where} the nonlinearity lives: on the
nodes in a NN, and on the edges in a KAN.}
\label{fig:nn_vs_kan}
\end{figure}

\subsubsection{Theoretical motivation}

The architecture is motivated by the classical Kolmogorov--Arnold
representation theorem, which states that any continuous function
$f:\mathbb{R}^d\to\mathbb{R}$ can be written as a finite composition
of continuous univariate functions and additions:
\[
    f(x_1,\dots,x_d)
    =
    \sum_{q=1}^{2d+1}
    \Phi_q\!\left(
        \sum_{p=1}^{d} \phi_{q,p}(x_p)
    \right).
\]
The theorem does not provide a training algorithm, but it supplies the
conceptual foundation for the KAN architecture: high-dimensional
functions can be captured through compositions of one-dimensional
functions. In this paper we invoke this idea carefully: rather than
relying on the full representation theorem, we work under the
compact-set density assumption stated in
Assumption~\ref{ass:kan_density}, which keeps the theoretical claims
precise without overstating what current theory supports.

\subsubsection{Symbolic extraction}

The key advantage of KANs for our application is \emph{symbolic
extraction}. After training, each learned edge function can be
inspected and, when appropriate, replaced by a known elementary
function. If an edge function closely resembles $\exp(x)$, for
instance, that edge can be locked to the exponential exactly. This
\emph{symbolic locking} step converts the trained KAN into an explicit
algebraic expression built from a chosen library of functions
$\mathcal{L}_{\mathrm{sym}}$.

In our pipeline, symbolic extraction proceeds in stages: the trained
KAN is first pruned to remove weak or redundant edges, the spline grid
is then refined to sharpen the fit, and symbolic locking is applied
only after these preparatory steps. The full procedure is described in
Section~\ref{sec:methodology}. The end result is a closed-form
symbolic surrogate that can be evaluated immediately, inspected for
financial consistency, and tested against known structural properties
of the CAT bond pricing function.

\subsection{Main Contributions}
\label{subsec:contributions}

This paper makes the following contributions.

\begin{enumerate}[label=(\roman*)]
    \item We develop a baseline-plus-residual KAN model for CAT
          bond pricing under lognormal losses. The network learns only
          the deviation from a closed-form lognormal baseline price,
          rather than reconstructing the full price from scratch. This
          decomposition reduces the learning problem to a small, smooth
          residual, making symbolic extraction substantially easier.

    \item We extract a compact symbolic pricing formula from the
          trained KAN and assess its out-of-sample accuracy on a
          held-out test set and on a fully disjoint holdout sample of
          $90{,}000$ observations, obtaining an average relative
          pricing error below $0.5\%$.

    \item We prove monotonicity properties of the true CAT bond price
          with respect to the catastrophe arrival intensity $\lambda$,
          the initial interest rate $r_0$, and the trigger threshold $D$,
          extending the continuity analysis of \cite{CATbondPaper} to
          directional comparative statics.

    \item We derive sufficient conditions on KAN edge functions under
          which the learned model inherits these monotonicity
          properties, providing a theoretical basis for
          structure-aware KAN design in this financial setting.

    \item We formulate a monotonicity-constrained training objective
          via an exterior penalty method and show that the penalised
          problem converges to the constrained problem as the penalty
          weights tend to infinity.

    \item We verify empirically, through sensitivity analysis, that
          the extracted symbolic formula exhibits the expected
          comparative statics with respect to $\lambda$, $r_0$, and
          $D$ across the parameter ranges studied in this paper.
\end{enumerate}

\subsection{Organisation of the Paper}
\label{subsec:organisation}

The remainder of this paper is organised as follows.
Section~\ref{sec:setup} introduces the CAT bond pricing framework:
the compound Poisson loss model, the Vasicek interest rate model,
the resulting pricing formulas, and the Monte Carlo data-generation
scheme with importance sampling.
Section~\ref{sec:theory} presents the theoretical analysis, covering
KAN approximation of the pricing map, monotonicity results, sufficient
conditions for monotonicity-preserving KANs, and the penalised
training formulation.
Section~\ref{sec:methodology} describes the modelling pipeline,
including the baseline-plus-residual setup, hyperparameter search, and
symbolic extraction procedure.
Section~\ref{sec:numerics} reports the numerical results, including
the extracted symbolic formula, out-of-sample accuracy and sensitivity
analysis.


\section{CAT Bond Pricing Model}
\label{sec:setup}

This section summarises the CAT bond pricing framework used throughout
the paper, following \cite{CATbondPaper}. We introduce the stochastic
models for catastrophe losses and interest rates, state the
risk-neutral pricing formulas, and explain how the training data are
generated by Monte Carlo simulation with importance sampling.

\subsection{Loss Process}
\label{subsec:loss}

Let $(\Omega, \mathcal{G}, \mathbb{Q})$ be a probability space, where
$\mathbb{Q}$ is the risk-neutral pricing measure, and let
$\mathbb{F} = \{\mathcal{F}_t\}_{t\ge0}$ be the filtration describing
the flow of market information. Catastrophe arrivals are modeled by a
counting process $\{M(t)\}_{t\ge0}$, where $M(t)$ records the number
of events that have occurred by time $t$.
The quantity $\lambda(t)$ represents the
instantaneous rate of catastrophe arrivals at time $t$. In this paper,
and in line with \cite{CATbondPaper}, we restrict attention to the
constant-intensity case $\lambda(t)\equiv\lambda>0$.

Each catastrophe $i$ generates a loss $X_i>0$, where the severities
$\{X_i\}_{i\ge1}$ are i.i.d.\ with common distribution function $F_X$
and are independent of the counting process $M$. The cumulative
indemnity loss up to time $t$ is therefore the compound Poisson sum
\begin{equation}
    L(t) = \sum_{i=1}^{M(t)} X_i.
    \label{eq:aggregate_loss}
\end{equation}
Throughout the paper we focus on lognormal loss severities,
\[
    X_i \sim \mathrm{Lognormal}(\mu_{\mathrm{sev}},
    \sigma_{\mathrm{sev}}^2),
\]
with parameters $\mu_{\mathrm{sev}} = 18.4$ and
$\sigma_{\mathrm{sev}} = 1.0$, following the empirical calibration in
\cite{BurneckiKukla}. These values imply a mean loss of
approximately $\$1.6\times 10^8$ per event, which is consistent with
large-scale catastrophe losses.

\subsection{Trigger Mechanism}
\label{subsec:trigger}

A trigger event occurs when cumulative losses first reach or exceed a
fixed threshold $D>0$. The trigger time is the stopping time
\[
    \tau = \inf\{t \ge 0 : L(t) \ge D\}.
\]
Equivalently, one may work with the trigger indicator
\[
    N(t) = \mathbf{1}_{\{L(t) \ge D\}}
         = \mathbf{1}_{\{\tau \le t\}},
\]
so that $N(t)=1$ if a trigger has occurred by time $t$, and $N(t)=0$
otherwise.

A key structural result from \cite{CATbondPaper} is that
$\{N(t)\}_{t\ge0}$ is itself a doubly stochastic Poisson process with
$\mathbb{F}$-intensity
\[
    \lambda(t)\bigl(1-F_X(D-L(t))\bigr)\mathbf{1}_{\{L(t)<D\}}.
\]
The intensity is nonzero only while cumulative losses remain below the
trigger threshold. This representation is the main ingredient behind
the pricing formulas stated below.

\subsection{Interest Rate Model}
\label{subsec:rates}

The risk-free interest rate $\{r_t\}_{t\ge0}$ is modeled by the Vasicek
dynamics \cite{Vasicek1977}
\[
    dr_t = \kappa(\theta - r_t)\,dt + \sigma_r\,dW_t,
\]
where $W_t$ is a standard Brownian motion under $\mathbb{Q}$,
$\kappa>0$ is the mean-reversion speed, $\theta>0$ is the long-run
mean level, and $\sigma_r>0$ is the short-rate volatility. The
interest-rate process is assumed to be independent of the catastrophe
loss process $L$.

Under the Vasicek model, the time-0 price of a default-free
zero-coupon bond paying 1 at maturity $t$ is available in closed form:
\begin{equation}
    P_Z(1,0,t) = \exp\!\bigl(A(0,t) - B(0,t)\,r_0\bigr),
    \label{eq:vasicek_bond}
\end{equation}
where $r_0$ is the initial short rate and
\[
    B(0,t) = \frac{1-e^{-\kappa t}}{\kappa},
    \qquad
    A(0,t) = \left(\theta - \frac{\sigma_r^2}{2\kappa^2}\right)
    \bigl(B(0,t) - t\bigr)
    - \frac{\sigma_r^2}{4\kappa}\,B(0,t)^2.
\]
We use the Vasicek model because it gives a tractable discount factor,
which enters both the exact pricing formulas and the closed-form
baseline introduced later in Section~\ref{sec:methodology}.

\subsection{CAT Bond Pricing Formulas}
\label{subsec:pricing}

A \emph{zero-coupon CAT bond} with face value $F$ and maturity $T$
pays $F$ at time $T$ if no trigger event has occurred by that date,
and pays nothing otherwise. A \emph{coupon CAT bond} additionally pays
coupons $C_1,\dots,C_n$ at dates $t_1 < \cdots < t_n \le T$,
conditional on survival up to the corresponding payment dates.

The following pricing relations are stated as Corollary~2.4 in
\cite{CATbondPaper}.

\begin{corollary}[CAT bond pricing formulas, {\cite{CATbondPaper}}]
\label{cor:pricing}
Let $t \le T$.
\begin{enumerate}[label=(\roman*)]
    \item The risk-neutral price of the zero-coupon CAT bond is
    \begin{equation}
        C_Z(F,t,T)
        =
        \Bigl(1 - \mathbb{E}^{\mathbb{Q}}[N(T)\mid\mathcal{F}_t]\Bigr)
        P_Z(F,t,T).
        \label{eq:zero_coupon_price}
    \end{equation}

    \item The risk-neutral price of the coupon CAT bond is
    \begin{equation}
        C_B(F,\{C_i\}_{i=1}^n,t,T)
        =
        C_Z(F,t,T)
        +
        \sum_{i=1}^n \mathbf{1}_{\{t \le t_i\}}\,
        C_Z(C_i,t,t_i).
        \label{eq:coupon_price}
    \end{equation}
\end{enumerate}
\end{corollary}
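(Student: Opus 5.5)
The plan is to derive both formulas from risk-neutral valuation, using the independence of the short rate $r$ and the loss process $L$.

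\textbf{Zero-coupon bond.} The payoff at $T$ is $F\,(1-N(T))$. Its time-$t$ price is
\[
C_Z(F,t,T)=\mathbb{E}^{\mathbb{Q}}\!\left[e^{-\int_t^T r_s\,ds}\,F\,(1-N(T))\,\middle|\,\mathcal{F}_t\right].
\]
I will take $\mathbb{F}$ to be generated by the two independent processes $(r,L)$, i.e.\ $\mathcal{F}_t=\mathcal{F}^r_t\vee\mathcal{F}^L_t$, with the future increments of each driver independent of $\mathcal{F}_t$.

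The key step is factorising this conditional expectation. Under the independence of $r$ and $L$, the discount factor $e^{-\int_t^T r_s ds}$ is measurable with respect to $\sigma(r)$, and $N(T)$ is measurable with respect to $\sigma(L)$. I will prove
\[
\mathbb{E}[XY\mid \mathcal{F}^r_t\vee\mathcal{F}^L_t]=\mathbb{E}[X\mid\mathcal{F}^r_t]\,\mathbb{E}[Y\mid\mathcal{F}^L_t]
\]
for bounded or nonnegative $X\in\sigma(r)$ and $Y\in\sigma(L)$. The route is to first check the identity on generating sets $A\cap B$, with $A\in\mathcal{F}^r_t$ and $B\in\mathcal{F}^L_t$: on such sets both sides integrate to $\mathbb{E}[X1_A]\,\mathbb{E}[Y1_B]$ by independence. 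A $\pi$–$\lambda$ argument then extends it to the whole join.

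Applying this identity, the first factor is $\mathbb{E}[e^{-\int_t^T r_s ds}\mid\mathcal{F}_t]$, which is the default-free bond price per unit face. Multiplied by $F$ it gives $P_Z(F,t,T)$, consistent with the Vasicek formula \eqref{eq:vasicek_bond} at $t=0$. The second factor is $1-\mathbb{E}[N(T)\mid\mathcal{F}_t]$, which yields \eqref{eq:zero_coupon_price}.

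I expect the main obstacle to be this conditional-independence factorisation, because the filtration is not explicitly specified. I handle it by fixing $\mathbb{F}$ as the joint natural filtration, augmented as usual.

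\textbf{Coupon bond.} The coupon bond's cash flows are:
\begin{itemize}
\item the redemption payment $F(1-N(T))$ at $T$;
\item the coupon $C_i(1-N(t_i))$ at each $t_i$.
\end{itemize}
The paper describes coupons as paid conditional on survival up to the payment date, which I formalise as the indicator $1-N(t_i)$.

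By linearity of conditional expectation, the price is the sum of the prices of these individual claims. Only payments with $t_i\ge t$ are still outstanding at time $t$, which accounts for the indicator $\mathbf{1}_{\{t\le t_i\}}$. Each outstanding coupon is a zero-coupon CAT bond with face $C_i$ and maturity $t_i$. Its price is $C_Z(C_i,t,t_i)$ by part (i) applied with $T$ replaced by $t_i$. Summing the pieces gives \eqref{eq:coupon_price}.

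Integrability is immediate throughout. Each payoff is bounded, and $e^{-\int r}$ is integrable because $\int_t^T r_s\,ds$ is Gaussian under Vasicek, so it has finite exponential moments.
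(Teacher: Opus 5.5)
The paper gives no proof of its own here. The corollary is quoted as Corollary~2.4 of Sester and Xu \cite{CATbondPaper}, and the preceding subsection credits the doubly stochastic Poisson (intensity) representation of $N$ as ``the main ingredient behind the pricing formulas.''

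Your argument is correct, and it takes a more elementary route that bypasses that representation entirely. The identity is expressed through $\mathbb{E}^{\mathbb{Q}}[N(T)\mid\mathcal{F}_t]$ rather than through the intensity. So all you need is the independence of $\sigma(r)$ and $\sigma(L)$, your conditional factorisation lemma, and linearity for the coupon part. The $\pi$--$\lambda$ proof of the lemma on the sets $A\cap B$ is sound, because $1-N(T)$ is bounded and $e^{-\int_t^T r_s\,ds}$ is integrable.

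What your route buys is transparency about hypotheses. It makes explicit that the formula rests on taking $\mathcal{F}_t=\mathcal{F}^r_t\vee\mathcal{F}^L_t$ (up to null sets). The paper leaves this implicit in the phrase ``flow of market information,'' yet it is genuinely needed: a filtration carrying joint information about future rates and losses would break the conditional factorisation. The intensity representation is what one would use to compute or rewrite the survival factor, not to establish this identity.

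One small step should be made explicit. When you write the final formula, you identify your factors $\mathbb{E}[X\mid\mathcal{F}^r_t]$ and $\mathbb{E}[Y\mid\mathcal{F}^L_t]$ with $\mathbb{E}[X\mid\mathcal{F}_t]$ and $\mathbb{E}[Y\mid\mathcal{F}_t]$. To justify this, apply the same lemma with $Y\equiv 1$ and $X\equiv 1$ respectively.
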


Formula \eqref{eq:zero_coupon_price} has a simple interpretation: the
CAT bond price equals the default-free discount factor $P_Z(F,t,T)$
multiplied by the survival probability
\[
    1-\mathbb{E}^{\mathbb{Q}}[N(T)\mid\mathcal{F}_t]
    =
    \mathbb{Q}(\tau > T \mid \mathcal{F}_t).
\]
Formula \eqref{eq:coupon_price} then writes the coupon bond price as a
sum of zero-coupon CAT bond prices, one for each contractual
cashflow.

Because the compound Poisson model does not admit a closed-form
expression for $\mathbb{E}^{\mathbb{Q}}[N(T)]$, these formulas must be
evaluated numerically. We therefore describe the Monte Carlo
procedure used to generate the training data.

\subsection{Monte Carlo Data Generation with Importance Sampling}
\label{subsec:mc}

The dataset is generated by evaluating the pricing formulas in
Corollary~\ref{cor:pricing} with Monte Carlo simulation. For a fixed
parameter vector $(r_0,\lambda,D,N,T)$, the zero-coupon price is
estimated by
\[
    \hat{C}_Z = \bigl(1 - \hat{\theta}\bigr) P_Z(F,0,T),
\]
where $\hat{\theta}$ is an estimator of the trigger probability
$\mathbb{Q}(L(T)\ge D)$.

\subsubsection{Naive Monte Carlo}

The most direct estimator simulates $n_s$ independent realisations of
the aggregate loss $L(T)$ and sets
\[
    \hat{\theta}^{\mathrm{MC}}
    = \frac{1}{n_s}\sum_{i=1}^{n_s} \mathbf{1}_{\{L_i(T) \ge D\}}.
\]
This estimator is simple but inefficient when $D$ is large, because
the trigger event becomes rare and only a small fraction of simulated
paths contribute to the estimate. The resulting variance can be large.

\subsubsection{Importance sampling for lognormal losses}

To reduce variance, \cite{CATbondPaper} uses importance sampling
(IS). The main idea is to simulate under a modified measure
$\mathbb{Q}^{a,b}$ that makes threshold exceedance more likely, and
then correct the estimator with an appropriate likelihood ratio.

Under $\mathbb{Q}^{a,b}$, the event count $M(T)$ is Poisson with
intensity $\lambda T e^a$, and the severities $X_i$ are lognormal with
shifted mean parameter $\mu_{\mathrm{sev}}+b$. The resulting IS
estimator is
\begin{equation}
    \hat{\theta}^{\mathrm{IS}}
    = \frac{1}{n_s}\sum_{i=1}^{n_s}
    \mathbf{1}_{\{L_i(T) \ge D\}}\,R(L_i(T)),
    \label{eq:is_estimator}
\end{equation}
where $R(L(T))$ is the likelihood ratio between $\mathbb{Q}$ and
$\mathbb{Q}^{a,b}$:
\begin{equation}
    R(L(T))
    =
    \exp\!\bigl(\lambda T(e^a-1) - aM(T)\bigr)
    \exp\!\left(
        \frac{b^2}{2\sigma_{\mathrm{sev}}^2}M(T)
    \right)
    \exp\!\left(
        -\frac{b}{\sigma_{\mathrm{sev}}^2}
        \sum_{i=1}^{M(T)}(\ln X_i - \mu_{\mathrm{sev}})
    \right).
    \label{eq:likelihood_ratio}
\end{equation}
The parameters $a$ and $b$ are chosen so as to reduce estimator
variance by moving the mean of $L(T)$ toward the threshold $D$. In
practice, they are computed numerically using the characterisation
given in Proposition~3.2 of \cite{CATbondPaper}. Under this scheme,
$\hat{\theta}^{\mathrm{IS}}$ remains unbiased and achieves strictly
smaller variance than naive Monte Carlo for sufficiently large
thresholds \cite{CATbondPaper}.

\subsubsection{Dataset}

Using the procedure above, we generate a dataset of $100{,}000$ CAT
bond prices over the parameter ranges reported in
Table~\ref{tab:data_params_sec2}, following the empirical calibration
in \cite{CATbondPaper}.

\begin{table}[ht]
\centering
\caption{Parameter ranges for training data generation.}
\label{tab:data_params_sec2}
\begin{tabular}{ll}
\toprule
\textbf{Parameter} & \textbf{Value / Range} \\
\midrule
Coupon rate $c$          & $5\%$ \\
Vasicek parameters       & $\kappa=0.2$, $\theta=3\%$, $\sigma_r=2\%$ \\
Initial short rate $r_0$ & $\mathrm{Uniform}(0,\,0.08)$ \\
Catastrophe intensity $\lambda$
                         & $\mathrm{Uniform}(30,\,40)$ \\
Trigger threshold $D$    & $\mathrm{Uniform}(7\times10^9,\,
                           13\times10^9)$ \\
Coupon frequency $N$     & $\{0,2,3,4,6,8,10,12\}$ \\
Maturity $T$             & $\mathrm{Uniform}(90,\,720)$ days \\
\bottomrule
\end{tabular}
\end{table}

For each observation, the features $(r_0,\lambda,D,N,T)$ are sampled
from the distributions in Table~\ref{tab:data_params_sec2}, and the
corresponding CAT bond price is computed by the IS-enhanced Monte
Carlo procedure. The resulting $100{,}000$ observations are used for
all subsequent training and evaluation.

\subsection{Baseline-Plus-Residual Learning Setup}
\label{subsec:setup_residual}

A key design choice in our approach is to train the KAN not on the 
raw price $P$ but on a residual relative to a closed-form baseline 
price $\Pbase$. If the baseline captures most of the pricing 
structure, the residual is small and smooth, which substantially 
reduces the complexity of the learning problem. It also encourages 
the extracted symbolic formula to represent financially interpretable 
corrections to the baseline rather than relearning the full pricing 
structure from scratch. The full mathematical construction is given 
in Section~\ref{sec:methodology}.

\section{Theoretical Analysis}
\label{sec:theory}
 
This section develops the theoretical foundations of the paper in four
parts. We first establish that the CAT bond pricing map can be
approximated arbitrarily well by KANs on compact domains. We then prove that the true CAT bond price
satisfies natural monotonicity properties with respect to the
event intensity $\lambda$, the initial interest rate $r_0$, and the
trigger threshold $D$. Next, we derive sufficient conditions on KAN
edge functions that guarantee these monotonicities are preserved by the
learned model. Finally, we formulate a penalised training objective
that enforces the monotonicity constraints and prove that it
converges to the constrained problem as the penalty weights tend to
infinity.
 
Throughout this section, we work at time $t=0$ with constant event
intensity $\lambda(t)\equiv\lambda$, and write
$C_Z(F,T;r_0,\lambda,D):=C_Z(F,0,T)$ to emphasise dependence on the
key pricing inputs.

\subsection{KAN Approximation of the CAT Bond Pricing Map}
\label{subsec:kan_approx}
 
Sester and Xu \cite{CATbondPaper} prove that the pricing maps $C_Z$ and $C_B$ can be
approximated arbitrarily well by fully connected feedforward neural
networks on compact domains, using two ingredients: continuity of the
pricing map, and the universal approximation property of the network
class. We follow exactly the same strategy here, replacing the
feedforward network class by a suitable class of KANs.
 
To avoid overstating the currently available KAN approximation theory,
we formulate the density step through an assumption. This
keeps the subsequent propositions clean and allows the results to be
updated as KAN theory develops.
 
\begin{assumption}[KAN density on compact sets]
\label{ass:kan_density}
For each input dimension $d \in \mathbb{N}$, let $\mathcal{K}_{d,1}$
denote the chosen class of scalar-valued KANs on $\mathbb{R}^d$ (for
example, spline-based KANs with sufficiently rich width, depth, and
grid resolution). Assume that for every compact set
$K \subset \mathbb{R}^d$, the restricted class
$\mathcal{K}_{d,1}|_{K}$ is dense in $C(K,\mathbb{R})$ with respect
to the uniform norm. Equivalently, for every $f \in C(K,\mathbb{R})$
and every $\varepsilon > 0$, there exists
$f_{\mathrm{KAN}} \in \mathcal{K}_{d,1}$ such that
\[
    \sup_{x \in K} |f_{\mathrm{KAN}}(x) - f(x)| < \varepsilon.
\]
\end{assumption}
 
\begin{remark}
Assumption~\ref{ass:kan_density} is the KAN analogue of the universal
approximation property used in Proposition~4.1 of \cite{CATbondPaper}.
It serves here as an approximation hypothesis. Existing KAN
approximation results, such as Theorem~2.1 in \cite{KANpaper}, instead
establish approximation under additional smooth compositional
representation assumptions on the target map.
\end{remark}
 
\begin{proposition}[KAN approximation of the zero-coupon CAT bond
pricing map]
\label{prop:kan_approx_zero}
Let Assumption~\ref{ass:kan_density} hold for $d=5$, and let
$K \subset \mathbb{R}^{5}$ be compact, with coordinates
$k=(r_0,\lambda,T,F,D)$. Assume that the zero-coupon CAT bond pricing
map $f(k):=C_Z(F,T;r_0,\lambda,D)$ is continuous on $K$. Then for
every $\varepsilon>0$, there exists $f_{\mathrm{KAN}} \in
\mathcal{K}_{5,1}$ such that
\[
    \sup_{k\in K}\bigl|f_{\mathrm{KAN}}(k)
    -C_Z(F,T;r_0,\lambda,D)\bigr|<\varepsilon.
\]
\end{proposition}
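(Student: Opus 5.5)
The argument is a direct application of Assumption~\ref{ass:kan_density}, mirroring the feedforward argument of Proposition~4.1 in \cite{CATbondPaper}. The map $f(k)=C_Z(F,T;r_0,\lambda,D)$ is assumed continuous on the compact set $K\subset\mathbb{R}^5$, so $f\in C(K,\mathbb{R})$. The plan has three steps: verify this membership, apply the density hypothesis with $d=5$ and the given $\varepsilon$, and note that the resulting KAN is a genuine element of $\mathcal{K}_{5,1}$ defined on all of $\mathbb{R}^5$ whose restriction to $K$ does the job.

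For the first step, I would note that continuity is a hypothesis of the proposition, so nothing needs proving. Still, I would remark why it is reasonable, since this is the only substantive content. By Corollary~\ref{cor:pricing}, $C_Z=\mathbb{Q}(L(T)<D)\,F\,P_Z(1,0,T)$. The Vasicek factor \eqref{eq:vasicek_bond} is an explicit smooth function of $(r_0,T)$, and the factor $F$ enters linearly. The survival probability is
\[
\mathbb{Q}(L(T)<D)=\sum_{m\ge0}e^{-\lambda T}\frac{(\lambda T)^m}{m!}\,F_X^{*m}(D^-),
\]
and each term is continuous in $(\lambda,T,D)$ because the lognormal convolutions have continuous distribution functions for $m\ge1$. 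The $m=0$ term equals $1$ for $D>0$, so I would need $K$ to lie in the region $D>0$, $T\ge0$, $\lambda\ge0$. The series converges uniformly on compacts by comparison with $\sum e^{-\lambda T}(\lambda T)^m/m!$ and the Weierstrass M-test. This is the step where an obstacle could arise, namely the discontinuity at $D=0$ or at negative parameters, but it is excluded by the assumed continuity on $K$.

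For the second step, Assumption~\ref{ass:kan_density} with $d=5$ gives $f_{\mathrm{KAN}}\in\mathcal{K}_{5,1}$ with $\sup_{k\in K}|f_{\mathrm{KAN}}(k)-f(k)|<\varepsilon$. This is exactly the claimed bound, after recalling that $f(k)$ is by definition $C_Z(F,T;r_0,\lambda,D)$ and that the coordinate ordering $k=(r_0,\lambda,T,F,D)$ is just a labelling of the inputs.
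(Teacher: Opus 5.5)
Your proposal is correct and matches the paper's proof. Continuity on the compact set $K$ is a hypothesis, so $f\in C(K,\mathbb{R})$, and Assumption~\ref{ass:kan_density} with $d=5$ then directly yields the required $f_{\mathrm{KAN}}\in\mathcal{K}_{5,1}$; your compound-Poisson series argument for why that continuity is plausible is a sound aside but not needed, since the paper also simply assumes it.
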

 
\begin{proof}
Since $f$ is continuous on the compact set $K$, we have
$f \in C(K,\mathbb{R})$. Applying Assumption~\ref{ass:kan_density}
with $d=5$ yields that for every $\varepsilon>0$ there exists
$f_{\mathrm{KAN}} \in \mathcal{K}_{5,1}$ satisfying
$\sup_{k\in K}|f_{\mathrm{KAN}}(k)-f(k)|<\varepsilon$, which is the
desired conclusion.
\end{proof}
 
\begin{proposition}[KAN approximation of the coupon CAT bond pricing
map]
\label{prop:kan_approx_coupon}
Let $n\in\mathbb{N}$ be fixed. Suppose Assumption~\ref{ass:kan_density}
holds for $d=2n+5$, and let $K_B \subset \mathbb{R}^{2n+5}$ be
compact, with coordinates
$k_B=(r_0,\lambda,\{t_i\}_{i=1}^n,T,\{C_i\}_{i=1}^n,F,D)$. Assume
that the coupon CAT bond pricing map
$f_B(k_B):=C_B(F,\{C_i\}_{i=1}^n,\{t_i\}_{i=1}^n,T;r_0,\lambda,D)$
is continuous on $K_B$. Then for every $\varepsilon>0$, there exists
$f_{B,\mathrm{KAN}} \in \mathcal{K}_{2n+5,1}$ such that
\[
    \sup_{k_B\in K_B}\bigl|f_{B,\mathrm{KAN}}(k_B)
    -C_B(F,\{C_i\}_{i=1}^n,\{t_i\}_{i=1}^n,T;r_0,\lambda,D)
    \bigr|<\varepsilon.
\]
\end{proposition}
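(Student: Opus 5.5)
The plan is to mirror the proof of Proposition~\ref{prop:kan_approx_zero}. The continuity hypothesis together with compactness of $K_B$ places $f_B$ in $C(K_B,\mathbb{R})$, and Assumption~\ref{ass:kan_density} with $d=2n+5$ then supplies the approximant directly. Since continuity of $f_B$ is assumed in the statement, no analysis of the indicator terms $\mathbf{1}_{\{t\le t_i\}}$ in \eqref{eq:coupon_price} is needed here. At $t=0$ and with $t_i\ge 0$, these indicators are identically one on any admissible domain anyway.

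Concretely, the steps in order are as follows.

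First, I would note that $f_B:K_B\to\mathbb{R}$ is continuous by hypothesis and $K_B$ is compact. Hence $f_B\in C(K_B,\mathbb{R})$, and in particular $f_B$ is bounded and uniformly continuous, although only membership in $C(K_B,\mathbb{R})$ is used.

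Second, I would fix $\varepsilon>0$ and invoke Assumption~\ref{ass:kan_density} with $d=2n+5$ and $K=K_B$. This gives $f_{B,\mathrm{KAN}}\in\mathcal{K}_{2n+5,1}$ with $\sup_{k_B\in K_B}|f_{B,\mathrm{KAN}}(k_B)-f_B(k_B)|<\varepsilon$.

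Third, I would unfold the definition of $f_B$ to obtain the displayed inequality.

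As an optional remark, one could alternatively exploit the structure \eqref{eq:coupon_price}. Each term $C_Z(C_i,0,t_i)$ is a zero-coupon price and is continuous if $C_Z$ is. Applying Proposition~\ref{prop:kan_approx_zero} termwise with tolerance $\varepsilon/(n+1)$ would give $n+1$ approximants whose sum is within $\varepsilon$ by the triangle inequality. That route, however, requires that the KAN class be closed under sums of networks acting on different coordinate subsets, which is plausible (stack networks in parallel and add a summing output layer) but is not part of the stated assumption. I would therefore use the direct route and mention this only as a remark.

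The main obstacle is not in the argument itself but in the hypothesis. The statement assumes continuity of $f_B$ on $K_B$ rather than proving it, so the proof is essentially a one-line application of the density assumption. The only care needed is to check that the dimension count matches: $r_0,\lambda,T,F,D$ give five coordinates, and $t_i$ and $C_i$ give $2n$ more, for $2n+5$ in total. This ensures that the correct instance $\mathcal{K}_{2n+5,1}$ of Assumption~\ref{ass:kan_density} is invoked.
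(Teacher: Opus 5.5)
Your proposal is correct and follows the paper's proof: continuity on the compact set $K_B$ gives $f_B\in C(K_B,\mathbb{R})$, and Assumption~\ref{ass:kan_density} with $d=2n+5$ then yields the approximant directly. You are also right not to rely on the termwise route. The paper mentions the zero-coupon decomposition only as an alternative way to verify continuity. Using it for approximation would need closure of the KAN class under sums and the $d=5$ instance of the assumption, and neither is given here.
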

 
\begin{proof}
By assumption $f_B \in C(K_B,\mathbb{R})$. Applying
Assumption~\ref{ass:kan_density} with $d=2n+5$ gives the result
directly. Continuity of $f_B$ may alternatively be verified from the
decomposition
\[
    C_B = C_Z(F,T;r_0,\lambda,D)
    +\sum_{i=1}^n C_Z(C_i,t_i;r_0,\lambda,D),
\]
since each zero-coupon term is continuous by Proposition~4.2 of
\cite{CATbondPaper}, and a finite sum of continuous functions is
continuous.
\end{proof}
 
\begin{remark}
Propositions~\ref{prop:kan_approx_zero} and
\ref{prop:kan_approx_coupon} are direct KAN analogues of
Propositions~4.2 and 4.3 in \cite{CATbondPaper}. They confirm that
replacing feedforward networks by KANs does not sacrifice the
theoretical approximation guarantee, while KANs additionally offer a
route toward symbolic extraction and interpretable pricing formulas.
\end{remark}
 
\subsection{Structural Monotonicity of the True CAT Bond Price}
\label{subsec:monotonicity}
 
We now establish that the true CAT bond pricing map satisfies natural
monotonicity properties that are consistent with financial intuition:
higher catastrophe intensity and higher interest rates reduce the bond
price, while a higher trigger threshold increases it.
 
\subsubsection{Zero-coupon CAT bond}
 
\begin{theorem}[Monotonicity of the zero-coupon CAT bond price]
\label{thm:zero_coupon_monotonicity}
Assume:
\begin{enumerate}
    \item[(a)] the severities $X_i$ satisfy $X_i\ge 0$ almost surely;
    \item[(b)] the aggregate loss process $L$ is independent of the
               short-rate process under $\mathbb{Q}$;
    \item[(c)] the default-free bond price $P_Z(F,0,T;r_0)$ is
               nonincreasing in $r_0$;
    \item[(d)] the Vasicek bond price satisfies
               $P_Z(F,0,T)=F\exp\bigl(A(0,T)-B(0,T)r_0\bigr)$
               with $B(0,T)\ge0$.
\end{enumerate}
Then the zero-coupon CAT bond price
$(r_0,\lambda,D)\mapsto C_Z(F,T;r_0,\lambda,D)$ is
\begin{itemize}
    \item nonincreasing in $\lambda$,
    \item nonincreasing in $r_0$,
    \item nondecreasing in $D$.
\end{itemize}
\end{theorem}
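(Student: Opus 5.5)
The plan is to use the product structure in Corollary~\ref{cor:pricing}(i) at $t=0$. There the price factorises as
\[
    C_Z(F,T;r_0,\lambda,D)=S(\lambda,D)\,P_Z(F,0,T;r_0),
    \qquad S(\lambda,D):=\mathbb{Q}(\tau>T)=\mathbb{Q}\bigl(L(T)<D\bigr).
\]
This factorisation relies on assumption (b): independence of $L$ and $r$ is what lets the survival probability separate from the discount factor. It also uses the fact that $L$ is nondecreasing in $t$ by (a), so that $\{\tau>T\}=\{L(T)<D\}$. Two observations make each factor easy to handle. First, $S$ depends only on $(\lambda,D)$ and takes values in $[0,1]$. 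Second, $P_Z$ depends only on $r_0$ (with $T,F$ fixed) and is strictly positive by (d). Monotonicity of the product in each variable then reduces to monotonicity of the single factor that depends on that variable, multiplied by a nonnegative constant.

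For $r_0$, I would simply combine (c) with $S\ge0$. As a cross-check, (d) gives $\partial_{r_0}P_Z=-B(0,T)P_Z\le0$. For $D$, take $D_1\le D_2$. Then $\{L(T)<D_1\}\subseteq\{L(T)<D_2\}$ pathwise, so $S(\lambda,D_1)\le S(\lambda,D_2)$, and multiplying by $P_Z>0$ gives the claim.

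The main step is monotonicity in $\lambda$. I would use a coupling argument. Fix $\lambda_1\le\lambda_2$ and build both compound Poisson processes on one probability space from the same i.i.d.\ severities $(X_i)$ and the same unit-rate Poisson process $\tilde M$, setting $M_j(T)=\tilde M(\lambda_j T)$. Then $M_1(T)\le M_2(T)$ pathwise. Since $X_i\ge0$ by (a), the partial sums $n\mapsto\sum_{i\le n}X_i$ are nondecreasing in $n$. Hence $L_1(T)\le L_2(T)$ almost surely, which gives $\{L_2(T)<D\}\subseteq\{L_1(T)<D\}$ and so $S(\lambda_2,D)\le S(\lambda_1,D)$. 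The coupling is legitimate because $S$ depends only on the law of $L(T)$, and that law is the same under the coupling as in the original model.

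As an alternative that avoids the coupling, I could write
\[
    S(\lambda,D)=\sum_{n\ge0}e^{-\lambda T}\frac{(\lambda T)^n}{n!}\,q_n,
    \qquad q_n:=\mathbb{Q}\Bigl(\sum_{i=1}^nX_i<D\Bigr).
\]
The sequence $q_n$ is nonincreasing in $n$ by (a), and the Poisson family is stochastically increasing in its mean, which gives the same conclusion. The only delicate point is exactly this stochastic ordering, which is why I expect the $\lambda$ step to be the obstacle. I would handle it with the coupling, since that is the cleanest route.
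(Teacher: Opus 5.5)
Your proposal is correct and follows essentially the same route as the paper: the factorisation $C_Z=\mathbb{Q}(L(T)<D)\,P_Z(F,0,T;r_0)$ from (b), direct arguments for $r_0$ and $D$, and a pathwise coupling for $\lambda$ that uses $X_i\ge 0$. The only difference is cosmetic: you time-change a single unit-rate Poisson process ($M_j(T)=\tilde M(\lambda_j T)$) with one shared severity sequence, whereas the paper superposes an independent Poisson process of rate $\lambda_2-\lambda_1$ onto the rate-$\lambda_1$ process; both give $M_1(T)\le M_2(T)$ pathwise with the correct marginal laws.
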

 
\begin{proof}
From the pricing formula in Section~2, we have the factorisation
\begin{equation}
    C_Z(F,T;r_0,\lambda,D)
    =
    \mathbb{Q}(L(T)<D)\,P_Z(F,0,T;r_0),
    \label{eq:factorisation}
\end{equation}
since $1-\mathbb{E}^{\mathbb{Q}}[N(T)]=\mathbb{Q}(\tau>T)=\mathbb{Q}(L(T)<D)$
and the loss and short-rate processes are independent by assumption~(b).
 
\medskip
\noindent\textbf{Monotonicity in $r_0$.}
The survival probability $\mathbb{Q}(L(T)<D)$ does not depend on
$r_0$ by assumption~(b). Hence from \eqref{eq:factorisation},
$C_Z = \mathbb{Q}(L(T)<D)\,P_Z(F,0,T;r_0)$ is nonincreasing in
$r_0$ whenever $P_Z$ is nonincreasing in $r_0$, which holds by
assumption~(c). Under the Vasicek specification~(d), this follows
explicitly by differentiation:
\[
    \frac{\partial C_Z}{\partial r_0}
    =
    -B(0,T)\,\mathbb{Q}(L(T)<D)\,
    F\exp\bigl(A(0,T)-B(0,T)r_0\bigr).
\]
Since $B(0,T)\ge0$, $\mathbb{Q}(L(T)<D)\ge0$, and the exponential is
strictly positive, the derivative is nonpositive.
 
\medskip
\noindent\textbf{Monotonicity in $D$.}
Fix $r_0$, $\lambda$, $T$ and let $D_1\le D_2$. Then pathwise
$\{L(T)<D_1\}\subseteq\{L(T)<D_2\}$, so
\[
    \mathbb{Q}(L(T)<D_1)\le\mathbb{Q}(L(T)<D_2).
\]
Multiplying by the strictly positive factor $P_Z(F,0,T;r_0)$ gives
$C_Z(F,T;r_0,\lambda,D_1)\le C_Z(F,T;r_0,\lambda,D_2)$.
 
\medskip
\noindent\textbf{Monotonicity in $\lambda$.}
Fix $r_0$, $D$, $T$ and let $0<\lambda_1<\lambda_2$. We construct a
coupling on a common probability space. Let $M^{(1)}(t)$ be a Poisson
process with intensity $\lambda_1$, and let $\widetilde{M}(t)$ be an
independent Poisson process with intensity $\lambda_2-\lambda_1$.
Define
\[
    M^{(2)}(t) := M^{(1)}(t)+\widetilde{M}(t),
\]
so that $M^{(2)}$ is Poisson with intensity $\lambda_2$ and
$M^{(2)}(t)\ge M^{(1)}(t)$ pathwise. Let $(X_i)_{i\ge1}$ and
$(\widetilde{X}_i)_{i\ge1}$ be i.i.d.\ nonnegative severity sequences
independent of the counting processes, and define
\[
    L^{(k)}(t):=\sum_{i=1}^{M^{(k)}(t)}X_i, \quad k=1,2,
\]
where $L^{(2)}$ uses the additional severities for the extra jumps of
$\widetilde{M}$. Since all severities are nonnegative almost surely,
\[
    L^{(2)}(t)\ge L^{(1)}(t) \quad \text{pathwise},
\]
which gives the set inclusion $\{L^{(2)}(T)<D\}\subseteq\{L^{(1)}(T)<D\}$
and therefore
\[
    \mathbb{Q}_{\lambda_2}(L(T)<D)
    \le
    \mathbb{Q}_{\lambda_1}(L(T)<D).
\]
Multiplying by $P_Z(F,0,T;r_0)$, which does not depend on $\lambda$,
gives $C_Z(F,T;r_0,\lambda_2,D)\le C_Z(F,T;r_0,\lambda_1,D)$.
 
This completes the proof of all three monotonicity statements.
\end{proof}
 
\begin{remark}
\label{rem:decomposition}
The factorisation \eqref{eq:factorisation} reveals a useful structural
decomposition: the CAT bond price separates into a
\emph{trigger-survival component} $\mathbb{Q}(L(T)<D)$, driven by
$\lambda$ and $D$, and a \emph{default-free discounting component}
$P_Z(F,0,T;r_0)$, driven by $r_0$. This separation underlies both the
monotonicity proofs and the baseline-plus-residual learning formulation
in Section~4.
\end{remark}
 
\subsubsection{Coupon CAT bond}
 
\begin{corollary}[Monotonicity of the coupon CAT bond price]
\label{cor:coupon_monotonicity}
Under the assumptions of Theorem~\ref{thm:zero_coupon_monotonicity},
the coupon CAT bond price
\[
    C_B(F,\{C_i\}_{i=1}^n,\{t_i\}_{i=1}^n,T;r_0,\lambda,D)
    =
    C_Z(F,0,T;r_0,\lambda,D)
    +\sum_{i=1}^n C_Z(C_i,0,t_i;r_0,\lambda,D)
\]
is nonincreasing in $\lambda$, nonincreasing in $r_0$, and nondecreasing
in $D$.
\end{corollary}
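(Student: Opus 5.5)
The plan is to reduce the corollary to Theorem~\ref{thm:zero_coupon_monotonicity}, applied once to each term of the coupon decomposition, and then use the fact that a finite sum of functions that are all monotone in the same direction is monotone in that direction. The decomposition comes from Corollary~\ref{cor:pricing}(ii) at $t=0$. Since $0\le t_i$ for every $i$, all the indicators $\mathbf{1}_{\{0\le t_i\}}$ equal one, which gives the representation displayed in the statement.

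First, for each summand $C_Z(C_i,0,t_i;r_0,\lambda,D)$, I would check that the hypotheses of Theorem~\ref{thm:zero_coupon_monotonicity} hold with face value $C_i$ and maturity $t_i$ in place of $F$ and $T$. The severity assumption (a) and the independence assumption (b) do not involve $F$ or $T$. For (c) and (d), the Vasicek price $P_Z(C_i,0,t_i)=C_i\exp\bigl(A(0,t_i)-B(0,t_i)r_0\bigr)$ has $B(0,t_i)=(1-e^{-\kappa t_i})/\kappa\ge 0$ because $\kappa>0$ and $t_i\ge0$. Hence it is nonincreasing in $r_0$ as long as $C_i\ge 0$. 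I would make explicit the implicit standing assumption that the face value and all coupons are nonnegative; this is the only point where care is needed, since a negative coupon would reverse the direction of monotonicity for its term.

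Second, the theorem then gives, for each term separately, that it is nonincreasing in $\lambda$, nonincreasing in $r_0$, and nondecreasing in $D$. Take, for example, $\lambda_1\le\lambda_2$ with all other arguments fixed. Summing the $n+1$ inequalities $C_Z(\cdot;\lambda_2)\le C_Z(\cdot;\lambda_1)$ yields $C_B(\cdot;\lambda_2)\le C_B(\cdot;\lambda_1)$. The arguments for $r_0$ and for $D$ are identical, with the inequality reversed for $D$.

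I expect no real obstacle. The only subtle point is the one noted above: verifying that hypotheses (c) and (d) transfer to each coupon maturity $t_i$ and that the coupon amounts are nonnegative, so that every summand is monotone in the same direction. Everything else is a termwise application of the theorem followed by addition of inequalities.
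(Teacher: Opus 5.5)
Your proposal is correct and matches the paper's proof: apply Theorem~\ref{thm:zero_coupon_monotonicity} to each zero-coupon summand and add the resulting inequalities. You also make explicit two points the paper leaves implicit, namely that $B(0,t_i)\ge 0$ at each coupon date and that the coupons $C_i$ are nonnegative.
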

 
\begin{proof}
The coupon CAT bond price is a finite sum of zero-coupon CAT bond
prices. Each summand satisfies the stated monotonicity properties by
Theorem~\ref{thm:zero_coupon_monotonicity}, and a finite sum of
functions with the same monotonicity direction preserves that
monotonicity.
\end{proof}
 
\begin{remark}
Theorem~\ref{thm:zero_coupon_monotonicity} and
Corollary~\ref{cor:coupon_monotonicity} go beyond the continuity
results in \cite{CATbondPaper} by establishing directional structure
in the pricing map. In symbolic regression, two formulas with similar
out-of-sample error may differ in financial credibility if one violates
these comparative statics while the other respects them. The
monotonicity results therefore provide a principled secondary model
selection criterion, complementing predictive accuracy.
\end{remark}
 
\subsection{Sufficient Conditions for Monotonicity-Preserving KANs}
\label{subsec:mono_kan}
 
We now identify simple structural conditions under which a KAN
inherits the monotonicity of the true pricing map.

\begin{theorem}[Sufficient conditions for coordinatewise monotonicity]
\label{thm:kan_monotone}
Let $f:\mathbb{R}^d\to\mathbb{R}$ be a scalar-output KAN in which each
hidden node computes a sum of its incoming edge outputs. Fix a
coordinate $x_j$.

Assume that every edge function $\phi_e:\mathbb{R}\to\mathbb{R}$ lying
on a directed path from input $x_j$ to the output is nondecreasing on
the domain of interest. Then $f$ is nondecreasing in $x_j$.

Similarly, if every such edge function is nonincreasing, then $f$ is
nonincreasing in $x_j$.
\end{theorem}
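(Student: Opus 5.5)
The plan is an induction over the layers of the KAN, tracking how a perturbation of $x_j$ propagates forward. Fix all other coordinates $x_i$, $i\neq j$, and consider two inputs $\mathbf{x}$ and $\mathbf{x}'$ that agree except that $x_j\le x_j'$. Write $z^{(\ell)}_v(\mathbf{x})$ for the value at node $v$ in layer $\ell$. Every such node value is the sum $\sum_{e=(u\to v)}\phi_e\bigl(z^{(\ell-1)}_u\bigr)$ over incoming edges, with layer $0$ given by the inputs. Call a node \emph{$j$-reachable} if there is a directed path from $x_j$ to it. We also only consider edges that are themselves on a path to the output, so the hypothesis covers every edge leaving a $j$-reachable node that can influence $f$. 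Edges that cannot reach the output do not affect $f$ and can be discarded at the outset.

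The inductive claim for the nondecreasing case is the following. For every layer $\ell$ and every node $v$ in that layer, either $v$ is not $j$-reachable, in which case $z^{(\ell)}_v(\mathbf{x})=z^{(\ell)}_v(\mathbf{x}')$, or $v$ is $j$-reachable, in which case $z^{(\ell)}_v(\mathbf{x})\le z^{(\ell)}_v(\mathbf{x}')$.

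\begin{itemize}
\item \emph{Base case.} At layer $0$ the claim holds trivially: $x_j\le x_j'$, and all other inputs are equal.
\item \emph{Inductive step.} Take a node $v$ in layer $\ell$ and split its incoming edges $e=(u\to v)$ into two groups.
\begin{itemize}
\item If $u$ is not $j$-reachable, its value is unchanged by induction, so $\phi_e(z_u)$ is unchanged.
\item If $u$ is $j$-reachable, then $e$ lies on a directed path from $x_j$ to the output, so $\phi_e$ is nondecreasing by hypothesis. Since $z_u(\mathbf{x})\le z_u(\mathbf{x}')$, we get $\phi_e(z_u(\mathbf{x}))\le\phi_e(z_u(\mathbf{x}'))$.
\end{itemize}
Summing over the incoming edges preserves the inequality, and gives equality when no incoming edge comes from a $j$-reachable node, which is exactly when $v$ is not $j$-reachable.
\end{itemize}
Applying the claim to the output node gives $f(\mathbf{x})\le f(\mathbf{x}')$.

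For the nonincreasing case the same induction does not simply flip signs, and this is the step I expect to be the main obstacle. Composing two nonincreasing functions yields a nondecreasing one. So with a path of length $L$, the hypothesis ``every edge is nonincreasing'' produces the sign $(-1)^L$, not $-1$. For a depth-$1$ KAN the statement holds verbatim. For the two-layer architecture of Section~\ref{subsec:kans}, it would give nondecreasing behaviour, contrary to the claim.

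I would therefore prove a corrected version. The claim is that $f$ is nonincreasing in $x_j$ whenever the edges out of $x_j$ are nonincreasing and all edges further downstream on paths from $x_j$ are nondecreasing. More generally, it suffices that along every path the number of nonincreasing edges is odd and consistent, which is the sign-parity condition.

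The corrected version follows from the same induction applied with the first-layer step reversed. At layer $1$, each $j$-reachable node satisfies $z_v(\mathbf{x})\ge z_v(\mathbf{x}')$. From there the nondecreasing downstream edges propagate this reversed inequality unchanged to the output.

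An alternative route reduces to the first case. Apply the nondecreasing result to $-f$ in the variable $-x_j$, after replacing each first-layer edge $\phi_e(x_j)$ by $\psi_e(y)=\phi_e(-y)$. Each $\psi_e$ is then nondecreasing in $y$, and the downstream edges are untouched.

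I would state this caveat explicitly, since the statement as written needs either depth one or this sign condition to be correct.
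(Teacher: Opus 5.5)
Your argument for the nondecreasing case is the paper's proof. Both run an induction over layers and split the incoming edges of each node into two groups: those from nodes not reachable from $x_j$, which contribute constants, and those from reachable nodes, which contribute nondecreasing terms. Your comparison of $\mathbf{x}$ and $\mathbf{x}'$ is a pointwise phrasing of ``$z_v$ is nondecreasing in $x_j$''.

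On the nonincreasing clause you are right and the paper is not. The paper proves only the nondecreasing case and leaves the other to an implicit ``similarly''. That analogy breaks at the second layer, because a nonincreasing function of a nonincreasing quantity is nondecreasing.

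\begin{itemize}
\item \textbf{Counterexample in the paper's own architecture.} In a $[5,h,1]$ KAN, set every edge out of $x_j$ and every hidden-to-output edge to $t\mapsto -t$. Every edge on a path from $x_j$ is then nonincreasing, yet $f=h\,x_j+c$ with $c$ independent of $x_j$, so $f$ is strictly increasing in $x_j$.
\item \textbf{Scope of the clause.} As a general implication it holds only for odd depth, for example depth one.
\item \textbf{Your repair.} The corrected statement (nonincreasing edges out of $x_j$, nondecreasing edges downstream) is correct. It is also exactly what the paper relies on. Corollary~\ref{cor:mono_kan_catbond} feeds $u_1=-\lambda$ and $u_2=-r_0$ into a KAN whose relevant edges are all nondecreasing, and it invokes only the nondecreasing clause. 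That is the same as composing the first-layer edges with $t\mapsto -t$. So the later results survive, and only the last sentence of the theorem needs your fix.
\end{itemize}

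There is one slip in your alternative reduction. Apply the nondecreasing result to $f$ itself, viewed as a function of $y=-x_j$, with first-layer edges $\psi_e(y)=\phi_e(-y)$ and everything else unchanged. Do not apply it to $-f$: negating both the output and the variable cancels and gives the wrong direction. Negating the output would also flip the monotonicity of the last-layer edges, so the hypothesis would no longer hold.
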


\begin{proof}
We prove the nondecreasing case.
For each node $v$, let $z_v$ denote its output. We show by induction
on the layers that every node reachable from $x_j$ has output
nondecreasing in $x_j$.

At the input layer, this is immediate since $x_j$ is nondecreasing
in itself.

Now let $v$ be a node in some later layer and assume that every
predecessor $u$ of $v$ lying on a directed path from $x_j$ has
output $z_u$ nondecreasing in $x_j$. Since $v$ computes the sum of
its incoming edge outputs,
\[
    z_v = \sum_{u \to v} \phi_{uv}(z_u).
\]
For edges $u \to v$ lying on a directed path from $x_j$, the function
$\phi_{uv}$ is nondecreasing by assumption and $z_u$ is nondecreasing
in $x_j$ by the induction hypothesis, so $\phi_{uv}(z_u)$ is
nondecreasing in $x_j$. For edges $u \to v$ not lying on any directed
path from $x_j$, the value $z_u$ does not depend on $x_j$, so
$\phi_{uv}(z_u)$ is constant in $x_j$. A sum of nondecreasing and
constant functions is nondecreasing. Hence $z_v$ is nondecreasing in
$x_j$.

By induction, the output node is nondecreasing in $x_j$, so $f$ is
nondecreasing in $x_j$.
\end{proof}

\begin{remark}
Theorem~\ref{thm:kan_monotone} gives a simple sufficient condition for
coordinatewise monotonicity in a KAN. It does not characterise all
monotone KANs. Instead, it shows that monotonicity can be guaranteed by
construction by restricting the edge functions along paths from a
chosen input coordinate to the output.
\end{remark}

\begin{figure}[H]
\centering
\begin{tikzpicture}[
    >=Latex,
    x=1cm, y=1cm,
    node_style/.style={
        circle,
        draw=black!70,
        fill=blue!5,
        minimum size=8.8mm,
        inner sep=0pt,
        line width=0.55pt,
        font=\small
    },
    edge_grey/.style={
        draw=gray!28,
        line width=0.45pt
    },
    edge_red/.style={
        draw=red!75!black,
        line width=0.9pt
    },
    layer_label/.style={
        font=\small
    }
]

\node[node_style] (r)      at (-2.4,0) {$r$};
\node[node_style] (lam)    at (-1.2,0) {$\lambda$};
\node[node_style] (logD)   at (0,0) {$\log D$};
\node[node_style] (N)      at (1.2,0) {$N$};
\node[node_style] (T)      at (2.4,0) {$T$};

\node[node_style] (h1) at (-3.0,2.8) {$h_1$};
\node[node_style] (h2) at (-1.8,2.8) {$h_2$};
\node[node_style] (h3) at (-0.6,2.8) {$h_3$};
\node[node_style] (h4) at (0.6,2.8) {$h_4$};
\node[node_style] (h5) at (1.8,2.8) {$h_5$};
\node[node_style] (h6) at (3.0,2.8) {$h_6$};

\node[node_style] (out) at (0,5.8) {$\hat C$};

\node[layer_label] at (0,-0.82) {Input layer};
\node[layer_label] at (0,3.55) {Hidden layer};
\node[layer_label] at (0,6.45) {Output layer};

\foreach \inp in {r,lam,logD,N,T} {
    \foreach \hid in {h1,h2,h3,h4,h5,h6} {
        \draw[edge_grey,->] (\inp) -- (\hid);
    }
}

\foreach \hid in {h1,h2,h3,h4,h5,h6} {
    \draw[edge_grey,->] (\hid) -- (out);
}

\foreach \hid in {h1,h2,h3,h4,h5,h6} {
    \draw[edge_red,->] (lam) -- (\hid);
    \draw[edge_red,->] (\hid) -- (out);
}

\node[font=\scriptsize,text=red!70!black] at (-1.95,1.22) {$\phi_e$};
\node[font=\scriptsize,text=red!70!black] at (-0.95,1.52) {$\phi_e$};
\node[font=\scriptsize,text=red!70!black] at (0.92,4.42) {$\phi_e$};

\end{tikzpicture}
\caption{Illustration of Theorem~\ref{thm:kan_monotone} for a KAN 
with architecture $[5,6,1]$. The highlighted edges are precisely 
those lying on directed paths from the input $\lambda$ to the 
output $\hat{C}$. If every highlighted edge function $\phi_e$ is 
nondecreasing (or\ nonincreasing) on the domain of interest, 
then the output inherits the same monotonicity in $\lambda$.}
\label{fig:kan_monotone_lambda}
\end{figure}

\subsubsection{Application to CAT bond pricing}

The desired monotonicity directions are mixed:
$\partial C/\partial\lambda\le0$,
$\partial C/\partial r_0\le0$,
$\partial C/\partial D\ge0$.
Define
\begin{equation}
    u_1 := -\lambda, \qquad u_2 := -r_0, \qquad u_3 := D,
    \label{eq:reparam}
\end{equation}
so that the reparametrised pricing map
$\widetilde{C}(u_1,u_2,u_3,\ldots)$ is nondecreasing in each of
$u_1$, $u_2$, $u_3$.

\begin{corollary}[Monotonicity-preserving KAN for CAT bond pricing]
\label{cor:mono_kan_catbond}
Let a KAN take inputs
$(u_1,u_2,u_3,\ldots)=(-\lambda,-r_0,D,\ldots)$.
If all edge functions on every directed path from $u_1$, $u_2$, $u_3$
to the output are nondecreasing, then the KAN output is nonincreasing
in $\lambda$, nonincreasing in $r_0$, and nondecreasing in $D$.
\end{corollary}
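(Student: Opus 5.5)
The plan is to apply Theorem~\ref{thm:kan_monotone} three times, once for each of the coordinates $u_1,u_2,u_3$, and then translate the conclusions back to the original variables through the reparametrisation \eqref{eq:reparam}. Write the KAN as a function $g(u_1,u_2,u_3,\ldots)$ of its inputs. The financial pricing surrogate is then the composition
\[
    \widehat{f}(\lambda,r_0,D,\ldots):=g(-\lambda,-r_0,D,\ldots).
\]

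\textbf{Step 1: apply the theorem in each coordinate.} Fix $j\in\{1,2,3\}$. By hypothesis, every edge function on a directed path from $u_j$ to the output is nondecreasing on the domain of interest. The nondecreasing case of Theorem~\ref{thm:kan_monotone}, applied with $x_j=u_j$, therefore gives that $g$ is nondecreasing in $u_j$ when the other inputs are held fixed. Edges that lie on paths from several of the $u_j$ are covered automatically, since the hypothesis is imposed on the union of all such paths. The theorem is applied separately for each $j$ and says nothing about joint monotonicity, so no interaction issue arises.

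\textbf{Step 2: pull back through the sign flips.}
\begin{itemize}
    \item For $\lambda_1\le\lambda_2$ we have $-\lambda_1\ge-\lambda_2$. Monotonicity of $g$ in $u_1$ then gives $\widehat{f}(\lambda_1,\ldots)=g(-\lambda_1,\ldots)\ge g(-\lambda_2,\ldots)=\widehat{f}(\lambda_2,\ldots)$, so the output is nonincreasing in $\lambda$.
    \item The same argument applied to $u_2=-r_0$ shows the output is nonincreasing in $r_0$.
    \item Since $u_3=D$ is the identity map, the output is nondecreasing in $D$ directly.
\end{itemize}

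\textbf{Main subtlety: the domain of interest.} The delicate point is bookkeeping rather than analysis. The phrase ``domain of interest'' in Theorem~\ref{thm:kan_monotone} must be read in terms of the $u$-variables: edge monotonicity is required on the set of values reached by intermediate nodes as $(-\lambda,-r_0,D)$ ranges over the image of the parameter box. I would state this explicitly so that the induction in the theorem's proof applies verbatim. No further work is needed beyond this.
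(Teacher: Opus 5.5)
Your proposal is correct and matches the paper's argument. The paper proves the corollary in one line by invoking Theorem~\ref{thm:kan_monotone} with the substitution $u_1=-\lambda$, $u_2=-r_0$, $u_3=D$; your coordinatewise application of that theorem, the explicit pull-back through the sign flips, and your remark on reading the ``domain of interest'' as the range of intermediate node values spell out details the paper leaves implicit.
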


The result follows immediately from Theorem~\ref{thm:kan_monotone}
by substituting $u_1=-\lambda$, $u_2=-r_0$, $u_3=D$.
 
\subsection{Monotonicity-Constrained KAN Training}
\label{subsec:constrained_training}
 
Corollary~\ref{cor:mono_kan_catbond} provides a hard architectural
guarantee of monotonicity. In practice, however, one may wish to train
an unconstrained KAN and then encourage monotonicity through the
objective function. We formulate a penalised training problem and show
it converges to the constrained problem as the penalty weights grow.
 
\subsubsection{Unconstrained and constrained problems}
 
The unconstrained KAN training problem minimises the empirical mean
squared error:
\begin{equation}
    \inf_{\theta}\,\mathcal{L}_\theta(\mathbf{x}, y)
    :=
    \inf_{\theta}\,\frac{1}{n}\sum_{i=1}^{n}
    \Bigl(\hat{C}_\theta(\mathbf{x}_i) - y_i\Bigr)^2,
    \label{eq:uncon_loss}
\end{equation}
where $\hat{C}_\theta$ is the KAN predictor with parameters $\theta$,
$\mathbf{x}_i$ are the input features, and $y_i$ are the target prices.
The structurally constrained problem is
\begin{equation}
    \inf_{\theta \in \mathcal{F}}\,\mathcal{L}_\theta(\mathbf{x}, y),
    \label{eq:con_prob}
\end{equation}
where the feasible set includes the three monotonicity constraints:
\begin{equation}
    \mathcal{F}
    :=
    \left\{
        \theta
        \;:\;
        \frac{\partial \hat{C}_\theta}{\partial \lambda} \leq 0,
        \quad
        \frac{\partial \hat{C}_\theta}{\partial r_0} \leq 0,
        \quad
        \frac{\partial \hat{C}_\theta}{\partial D} \geq 0
    \right\}.
    \label{eq:feasible_set}
\end{equation}
Note that $\mathcal{F}$ is non-empty: by
Corollary~\ref{cor:mono_kan_catbond}, any KAN satisfying the
sufficient edge-function conditions belong to
$\mathcal{F}$.
 
\subsubsection{Penalised objective}

We add penalty terms to the training objective:
\begin{equation}
    \tilde{\mathcal{L}}_{\theta,\boldsymbol{\mu}}(\mathbf{x}, y)
    :=
    \mathcal{L}_\theta(\mathbf{x}, y)
    +\mu_1\,\mathbb{E}_{\nu}\!\left[
        \Bigl(\tfrac{\partial \hat{C}_\theta}{\partial \lambda}
        \Bigr)^{\!+}
    \right]
    +\mu_2\,\mathbb{E}_{\nu}\!\left[
        \Bigl(\tfrac{\partial \hat{C}_\theta}{\partial r_0}
        \Bigr)^{\!+}
    \right]
    +\mu_3\,\mathbb{E}_{\nu}\!\left[
        \Bigl(-\tfrac{\partial \hat{C}_\theta}{\partial D}
        \Bigr)^{\!+}
    \right],
    \label{eq:penalised_loss}
\end{equation}
where $(z)^+:=\max(z,0)$ and
$\boldsymbol{\mu}=(\mu_1,\mu_2,\mu_3)\in\mathbb{R}_+^3$ are
non-negative penalty weights.

Here, $\nu$ is a probability measure on the input domain. It can conceptualised as a distribution that assigns positive probability to every region
of the input space. The three penalty terms work as follows: the first
penalises any region where the predicted price increases with $\lambda$
(which it should not), the second penalises any region where it
increases with $r_0$ (which it should not), and the third penalises
any region where it decreases with $D$ (which it should not).

The key reason for using an expectation over $\nu$ rather than checking
the constraints at a finite number of points is that monotonicity must
hold everywhere on a continuum of inputs, not just at the sampled
points. If the derivative $\frac{\partial \hat{C}_\theta}{\partial
\lambda}$ is positive at some point, it must also be positive in a
small neighbourhood around that point by continuity. Since $\nu$
assigns positive probability to every such neighbourhood, the
corresponding penalty term will be strictly positive whenever any
violation exists. Conversely, a zero penalty guarantees that no
violation occurs anywhere on the domain.
\subsubsection{Convergence of the penalised problem}

The penalised objective in \eqref{eq:penalised_loss} may be viewed as
an exterior penalty relaxation of the constrained optimisation
problem. Recall that the feasible set is
\[
\mathcal{F}
:=
\left\{
\theta:
\frac{\partial \hat{C}_\theta}{\partial \lambda}\le 0,\quad
\frac{\partial \hat{C}_\theta}{\partial r_0}\le 0,\quad
\frac{\partial \hat{C}_\theta}{\partial D}\ge 0
\ \text{on the domain of interest}
\right\}.
\]

For convenience, define
\[
P_1(\theta)
=
\mathbb{E}_{\nu}\!\left[
\Bigl(\tfrac{\partial \hat{C}_\theta}{\partial \lambda}\Bigr)^+
\right],
\qquad
P_2(\theta)
=
\mathbb{E}_{\nu}\!\left[
\Bigl(\tfrac{\partial \hat{C}_\theta}{\partial r_0}\Bigr)^+
\right],
\qquad
P_3(\theta)
=
\mathbb{E}_{\nu}\!\left[
\Bigl(-\tfrac{\partial \hat{C}_\theta}{\partial D}\Bigr)^+
\right].
\]
Each \(P_j(\theta)\) is non-negative. Under the standing assumption that
the relevant partial derivatives are continuous and that \(\nu\)
assigns positive mass to every nonempty open subset of the domain, we
have
\[
P_1(\theta)=P_2(\theta)=P_3(\theta)=0
\quad\Longleftrightarrow\quad
\theta\in\mathcal{F}.
\]

We also recall two standard definitions. A sequence
\((\theta_n)\) is called a \emph{minimising sequence} for a function
\(J\) if
\[
J(\theta_n)\to \inf_{\theta} J(\theta)
\qquad\text{as }n\to\infty.
\]
A function \(f\) is called \emph{lower semicontinuous} if, whenever
\(\theta_n\to\bar{\theta}\), one has
\[
f(\bar{\theta})
\le
\liminf_{n\to\infty} f(\theta_n).
\]

The following result gives the standard value-level justification for
the penalty approach.

\begin{proposition}[Exterior penalty convergence]
\label{prop:penalty_convergence}
Assume that the feasible set \(\mathcal{F}\) is non-empty. For
\(\boldsymbol{\mu}=(\mu_1,\mu_2,\mu_3)\in\mathbb{R}_+^3\), define
\[
J_{\boldsymbol{\mu}}(\theta)
:=
\mathcal{L}_\theta(\mathbf{x},y)
+\mu_1 P_1(\theta)+\mu_2 P_2(\theta)+\mu_3 P_3(\theta).
\]
Assume further that \(\mathcal{L}_\theta(\mathbf{x},y)\) and each
\(P_j(\theta)\) are lower semicontinuous in \(\theta\), and that every
minimising sequence of the penalised problems admits a convergent
subsequence. Then
\begin{equation}
\lim_{\mu_1,\mu_2,\mu_3\to\infty}
\inf_{\theta} J_{\boldsymbol{\mu}}(\theta)
=
\inf_{\theta\in\mathcal{F}}
\mathcal{L}_\theta(\mathbf{x},y).
\label{eq:penalty_convergence}
\end{equation}
\end{proposition}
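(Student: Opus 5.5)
The proof is the standard exterior-penalty argument in two halves: an upper bound from feasibility and a lower bound from compactness plus lower semicontinuity. Write $v(\boldsymbol{\mu}):=\inf_\theta J_{\boldsymbol{\mu}}(\theta)$ and $v^\ast:=\inf_{\theta\in\mathcal{F}}\mathcal{L}_\theta(\mathbf{x},y)$.

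\textbf{Step 1: upper bound and monotonicity of the values.} For every $\theta\in\mathcal{F}$ we have $P_1(\theta)=P_2(\theta)=P_3(\theta)=0$, so $J_{\boldsymbol{\mu}}(\theta)=\mathcal{L}_\theta(\mathbf{x},y)$. Taking the infimum over $\mathcal{F}$, which is nonempty, gives $v(\boldsymbol{\mu})\le v^\ast$ for every $\boldsymbol{\mu}$. Since each $P_j\ge0$, $J_{\boldsymbol{\mu}}$ is nondecreasing in each $\mu_j$, and hence so is $v(\boldsymbol{\mu})$. Also $v(\boldsymbol{\mu})\ge0$ because $\mathcal{L}\ge0$. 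The net limit as $\min_j\mu_j\to\infty$ therefore exists and is at most $v^\ast$. To see this, note that $v(\boldsymbol{\mu})\ge v(m,m,m)$ with $m=\min_j\mu_j$, so it suffices to show $\liminf_{m\to\infty}v(m,m,m)\ge v^\ast$.

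\textbf{Step 2: lower bound.} Take any sequence $\boldsymbol{\mu}^{(n)}$ with $\min_j\mu^{(n)}_j\to\infty$. Choose $\theta_n$ with
\[
J_{\boldsymbol{\mu}^{(n)}}(\theta_n)\le v(\boldsymbol{\mu}^{(n)})+1/n .
\]
Read the compactness hypothesis as applying to such a sequence of near-minimisers along the penalty path. Pass to a subsequence along which $v(\boldsymbol{\mu}^{(n)})$ tends to its liminf and $\theta_n\to\bar\theta$.

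Since $v(\boldsymbol{\mu}^{(n)})\le v^\ast$ and $\mathcal{L}\ge0$, we get
\[
\mu^{(n)}_j P_j(\theta_n)\le v^\ast+1 .
\]
Hence $P_j(\theta_n)\le (v^\ast+1)/\mu^{(n)}_j\to0$. Lower semicontinuity then gives $P_j(\bar\theta)\le\liminf_n P_j(\theta_n)=0$, so $P_j(\bar\theta)=0$ for each $j$. By the stated equivalence, $\bar\theta\in\mathcal{F}$.

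Finally, lower semicontinuity of $\mathcal{L}$ together with $P_j\ge0$ gives
\[
v^\ast\le\mathcal{L}_{\bar\theta}\le\liminf_n\mathcal{L}_{\theta_n}\le\liminf_n J_{\boldsymbol{\mu}^{(n)}}(\theta_n)=\liminf_n v(\boldsymbol{\mu}^{(n)}).
\]
Combined with Step 1, this yields \eqref{eq:penalty_convergence}.

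\textbf{Main obstacle.} The hard step is justifying the extraction of a convergent subsequence. The hypothesis speaks of ``minimising sequences of the penalised problems'', but the argument needs compactness of the near-minimisers $\theta_n$ across a \emph{changing} sequence of penalty weights. I would state explicitly that the assumption is interpreted this way; equivalently, one can assume the sublevel sets of $\mathcal{L}$ intersected with $\{\sum_j P_j\le 1\}$ are relatively compact. A second subtlety is the equivalence $P_j=0\Leftrightarrow\theta\in\mathcal{F}$. This relies on the standing assumptions that the partial derivatives are continuous and that $\nu$ has full support, and I would invoke those assumptions exactly as stated before the proposition.
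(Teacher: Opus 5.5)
Your proposal is correct and is essentially the paper's argument: the feasibility upper bound $v(\boldsymbol{\mu})\le v^\ast$, near-minimisers $\theta_n$ whose penalties are forced to zero by $\mu_j^{(n)}P_j(\theta_n)\le v^\ast+1$, a convergent subsequence obtained under the same reading of the compactness hypothesis that the paper uses implicitly, and lower semicontinuity giving $\bar\theta\in\mathcal{F}$ and $v^\ast\le\liminf_n v(\boldsymbol{\mu}^{(n)})$. The differences are cosmetic: you prove the lower bound directly rather than by contradiction, and the monotonicity of $v$ in $\boldsymbol{\mu}$ together with the diagonal reduction is an unneeded but harmless extra, since existence of the limit already follows from $\limsup\le v^\ast\le\liminf$.
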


\begin{proof}
Set
\[
L^*:=\inf_{\theta\in\mathcal{F}}\mathcal{L}_\theta(\mathbf{x},y),
\]
which is well-defined because \(\mathcal{F}\neq\emptyset\). For each
\(\boldsymbol{\mu}\), define the penalised optimal value
\[
v(\boldsymbol{\mu})
:=
\inf_{\theta} J_{\boldsymbol{\mu}}(\theta).
\]

We show that \(v(\boldsymbol{\mu})\to L^*\) as
\(\mu_1,\mu_2,\mu_3\to\infty\).

\medskip
\noindent
\textbf{Step 1: Upper bound.}
Let \(\theta\in\mathcal{F}\). Then
\(P_1(\theta)=P_2(\theta)=P_3(\theta)=0\), so
\[
J_{\boldsymbol{\mu}}(\theta)=\mathcal{L}_\theta(\mathbf{x},y).
\]
Since this holds for every \(\theta\in\mathcal{F}\), taking the
infimum over \(\mathcal{F}\) yields
\[
v(\boldsymbol{\mu})
=
\inf_{\theta} J_{\boldsymbol{\mu}}(\theta)
\le
\inf_{\theta\in\mathcal{F}}\mathcal{L}_\theta(\mathbf{x},y)
=
L^*.
\]
Hence
\[
v(\boldsymbol{\mu})\le L^*
\qquad\text{for every }\boldsymbol{\mu}.
\]

\medskip
\noindent
\textbf{Step 2: Lower bound.}
We now show that
\[
\liminf_{\mu_1,\mu_2,\mu_3\to\infty} v(\boldsymbol{\mu})\ge L^*.
\]

Suppose, for contradiction, that this is false. Then there exist
\(\varepsilon>0\) and a sequence
\(\boldsymbol{\mu}^{(n)}=(\mu_1^{(n)},\mu_2^{(n)},\mu_3^{(n)})\) such
that
\[
\mu_1^{(n)},\mu_2^{(n)},\mu_3^{(n)}\to\infty
\qquad\text{as }n\to\infty,
\]
and
\[
v(\boldsymbol{\mu}^{(n)})\le L^*-\varepsilon
\qquad\text{for all }n.
\]

By the definition of infimum, for each \(n\) there exists
\(\theta_n\) such that
\[
J_{\boldsymbol{\mu}^{(n)}}(\theta_n)
\le
v(\boldsymbol{\mu}^{(n)})+\frac{1}{n}
\le
L^*-\varepsilon+\frac{1}{n}.
\]
Thus \((\theta_n)\) is a minimising sequence for the penalised
problems.

Since every penalty term is non-negative, we have
\[
\mathcal{L}_{\theta_n}(\mathbf{x},y)
\le
J_{\boldsymbol{\mu}^{(n)}}(\theta_n)
\le
L^*-\varepsilon+\frac{1}{n}.
\]
Also, for each \(j=1,2,3\),
\[
\mu_j^{(n)} P_j(\theta_n)
\le
J_{\boldsymbol{\mu}^{(n)}}(\theta_n)
\le
L^*-\varepsilon+\frac{1}{n}.
\]
Because \(\mu_j^{(n)}\to\infty\), it follows that
\[
P_j(\theta_n)\to 0
\qquad\text{for }j=1,2,3.
\]

By assumption, the minimising sequence \((\theta_n)\) admits a
convergent subsequence. Passing to this subsequence if necessary, we
may assume that
\[
\theta_n\to \bar{\theta}.
\]

Since each \(P_j\) is lower semicontinuous,
\[
P_j(\bar{\theta})
\le
\liminf_{n\to\infty} P_j(\theta_n)
=
0.
\]
As each \(P_j(\bar{\theta})\ge 0\), we obtain
\[
P_j(\bar{\theta})=0
\qquad\text{for }j=1,2,3.
\]
Hence \(\bar{\theta}\in\mathcal{F}\).

Next, by lower semicontinuity of
\(\mathcal{L}_\theta(\mathbf{x},y)\),
\[
\mathcal{L}_{\bar{\theta}}(\mathbf{x},y)
\le
\liminf_{n\to\infty}\mathcal{L}_{\theta_n}(\mathbf{x},y)
\le
L^*-\varepsilon.
\]
But since \(\bar{\theta}\in\mathcal{F}\), the definition of \(L^*\)
gives
\[
L^*
\le
\mathcal{L}_{\bar{\theta}}(\mathbf{x},y).
\]
Combining the last two inequalities yields
\[
L^*
\le
\mathcal{L}_{\bar{\theta}}(\mathbf{x},y)
\le
L^*-\varepsilon,
\]
which is impossible. This contradiction proves that
\[
\liminf_{\mu_1,\mu_2,\mu_3\to\infty} v(\boldsymbol{\mu})\ge L^*.
\]

\medskip
\noindent
\textbf{Step 3: Conclusion.}
From Step 1 we have \(v(\boldsymbol{\mu})\le L^*\) for every
\(\boldsymbol{\mu}\), so
\[
\limsup_{\mu_1,\mu_2,\mu_3\to\infty} v(\boldsymbol{\mu})\le L^*.
\]
From Step 2 we have
\[
\liminf_{\mu_1,\mu_2,\mu_3\to\infty} v(\boldsymbol{\mu})\ge L^*.
\]
Therefore,
\[
\lim_{\mu_1,\mu_2,\mu_3\to\infty} v(\boldsymbol{\mu})=L^*,
\]
which is exactly \eqref{eq:penalty_convergence}.
\end{proof}
\section{KAN Model Design and Symbolic Extraction}
\label{sec:methodology}
 
This section describes the full modeling pipeline used to construct a
symbolic KAN surrogate for CAT bond pricing. The pipeline consists of
five stages: (i) computing a closed-form baseline price for each
observation; (ii) defining a log-ratio residual target and
standardising inputs and outputs; (iii) selecting KAN hyperparameters
via Tree-structured Parzen Estimation (TPE); (iv) extracting a
symbolic formula through pruning, grid refinement, and symbolic
locking; and (v) evaluating the final model on a strictly held-out
test set and a disjoint holdout sample. Each stage is described in
detail below.
 
\subsection{Closed-Form Baseline Price}
\label{subsec:baseline}
 
We work with the dataset of $100{,}000$ simulated CAT bond prices
described in Section~\ref{sec:setup}. From this dataset we draw a
working sample of $10{,}000$ observations, partitioned into training
($70\%$), validation ($15\%$), and test ($15\%$) splits using a fixed
random seed, yielding $7{,}000$ training, $1{,}500$ validation, and
$1{,}500$ test observations. The validation set is used exclusively
during hyperparameter selection; the test set is held out entirely
until final model evaluation.
 
Rather than fitting the raw CAT bond price directly, we exploit the
known financial structure of the pricing problem by first constructing
a closed-form baseline price $\Pbase$, and then training the KAN on
the residual deviation from this baseline. This decomposition
substantially reduces the magnitude and complexity of the learning
target, and encourages the symbolic formula to represent only the
correction to the financial prior rather than relearning discounting
and survival effects already available in closed form.
 
\subsubsection{Vasicek discount factor}
 
The default-free discount factor $P(0,t)$ is computed using the
closed-form Vasicek bond price given in
equation~\eqref{eq:vasicek_bond}.
 
\subsubsection{Lognormal survival approximation}
 
Let $L_t$ denote the total catastrophe loss by time $t$. Under the
compound Poisson model with lognormal severities, an exact closed-form
expression for $\mathbb{Q}(L_t < D)$ is not available. We therefore
use a moment-matching approximation that fits a lognormal distribution
to the first two moments of $L_t$, yielding
\begin{equation}
    S(t) \approx \Phi\!\left(
        \frac{\log D - \mu_L(t)}{\sqrt{s^2(t)}}
    \right),
    \label{eq:survival_approx}
\end{equation}
where $\Phi$ is the standard normal CDF and the matched lognormal
parameters are
\[
    s^2(t) = \log\!\left(
        1 + \frac{e^{\sigma_{\mathrm{sev}}^2}}{\lambda t}
    \right),
    \qquad
    \mu_L(t) = \log(\lambda t)
    + \mu_{\mathrm{sev}}
    + \tfrac{1}{2}\sigma_{\mathrm{sev}}^2
    - \tfrac{1}{2}s^2(t).
\]
The approximation \eqref{eq:survival_approx} is fast to evaluate and
provides a financially grounded prior for the trigger survival
probability.
 
\subsubsection{Baseline CAT bond price}
 
Let $F = 1000$ denote the face value. For a zero-coupon CAT bond
($N=0$), the baseline price is
\[
    \Pbase = F\,P(0,T)\,S(T).
\]
For a coupon bond with $N$ equally spaced coupon dates
$t_i = iT/N$ for $i=1,\dots,N$, the baseline price is
\begin{equation}
    \Pbase
    = \sum_{i=1}^{N} Fc\,P(0,t_i)\,S(t_i)
    + F\,P(0,T)\,S(T),
    \label{eq:baseline_price}
\end{equation}
where $c$ is the coupon rate. Note that \eqref{eq:baseline_price} has
the same structural form as the true pricing formula in
Section~\ref{sec:setup}, with the exact survival probability replaced
by the moment-matched approximation~\eqref{eq:survival_approx}.
 
\subsection{Residual Target and Normalisation}
\label{subsec:residual}
 
\subsubsection{Log-ratio residual}
 
Instead of fitting the observed price $P$ directly, we define the
log-ratio residual target
\begin{equation}
    y = \log\!\left(
        \frac{P + \varepsilon}{\Pbase + \varepsilon}
    \right),
    \label{eq:log_ratio}
\end{equation}
where $\varepsilon = 10^{-8}$ is a small numerical stabiliser. This
construction has three advantages. First, it converts multiplicative
deviations from the baseline into an additive target, which is more
natural for regression. Second, it stabilises the learning problem
when prices vary substantially in scale. Third, it ensures that a KAN
predicting $y \approx 0$ everywhere corresponds to using the baseline
as a direct pricing formula, providing a natural benchmark.
 
The predicted price is recovered from a predicted residual $\hat{y}$
as
\begin{equation}
    \Chat = (\Pbase + \varepsilon)\,\exp(\hat{y}).
    \label{eq:price_recovery}
\end{equation}
 
\subsubsection{Feature construction}
 
We use the five-dimensional input feature vector
\[
    \mathbf{x} = (r_0,\;\lambda,\;\log D,\;N,\;T),
\]
where $\log D := \log(D + 10^{-10})$ replaces the raw threshold $D$
to reduce scale imbalance and to better reflect the multiplicative
nature of threshold effects on the survival probability.
 
\subsubsection{Standardisation}
 
All features and the residual target are standardised using
training-set statistics only. For each feature $x_j$,
\[
    \tilde{x}_j = \frac{x_j - \mu_j^{(x)}}{\sigma_j^{(x)}},
\]
where $\mu_j^{(x)}$ and $\sigma_j^{(x)}$ are the training mean and
standard deviation. The residual target is standardised analogously:
\[
    \tilde{y} = \frac{y - \mu^{(y)}}{\sigma^{(y)}}.
\]
The same training statistics are applied to the validation and test
sets without recomputation, preventing any information leakage across
splits. At inference time, predictions are transformed back by
\[
    \hat{y} = \sigma^{(y)}\,\hat{\tilde{y}} + \mu^{(y)},
\]
and the raw predicted price is reconstructed via
\eqref{eq:price_recovery}.
 
\subsection{KAN Model and Symbolic Library}
\label{subsec:kan_model}
 
We use a Kolmogorov--Arnold Network as the residual learner. The KAN
takes the standardised feature vector $\xstd \in \mathbb{R}^5$ as
input and outputs a scalar prediction $\hat{\tilde{y}}$ of the
standardised log-ratio residual. We restrict attention to shallow
two-layer architectures of the form
\[
    [5,\;h,\;1], \qquad h \in \{4,\,6,\,8,\,10\},
\]
where the input dimension is five and the output dimension is one.
Deeper architectures are deliberately excluded because they tend to
produce more complex symbolic formulas with less financial
interpretability, and because the theoretical results of
Section~\ref{sec:theory} apply most cleanly to shallow networks.
 
The symbolic library used during formula extraction is
\begin{equation}
    \mathcal{L}_{\mathrm{sym}}
    = \bigl\{x,\;x^2,\;x^3,\;\exp(\cdot),\;\Phi(\cdot)\bigr\},
    \label{eq:sym_lib}
\end{equation}
where $\Phi(\cdot)$ denotes the standard normal CDF, registered as a
custom function in the KAN symbolic library. The library
\eqref{eq:sym_lib} is deliberately small and financially motivated:
$\Phi$ arises naturally from the lognormal moment-matched survival
approximation~\eqref{eq:survival_approx}, while $\exp$ corresponds to
the Vasicek discount factor structure. Trigonometric functions and
$\tanh$ are excluded because they lack a clear financial
interpretation in this pricing context and tend to produce
unnecessarily complex formulas. All computations are performed in
double precision (\texttt{float64}) throughout.
 
\subsection{Hyperparameter Selection via TPE}
\label{subsec:tpe}
The hyperparameter selection approach adopted here is inspired by
\cite{KANEnergy2025}, who apply TPE-based KAN architecture search
in a scientific machine learning context, selecting grid size,
spline order, and network width via the \texttt{hyperopt} library.

KAN hyperparameters are selected using Tree-structured Parzen
Estimation (TPE), a sequential model-based optimisation algorithm
implemented via the \texttt{hyperopt} library \cite{Bergstra2013}.
The key idea behind TPE is simple: rather than blindly trying every
possible combination of hyperparameters (as grid search does), TPE
learns from past evaluations to make smarter guesses about where to
look next. Concretely, after each trial, TPE splits the observed
configurations into two groups --- those that performed well and those
that did not --- and builds a probabilistic model of each group. New
configurations are then proposed from regions that are likely to
perform well and unlikely to perform badly. This means TPE spends more
of its evaluation budget in promising areas of the search space,
rather than wasting trials on configurations that are unlikely to
work.

Grid search, by contrast, evaluates every combination on a fixed
grid regardless of what previous results suggest. For a search space
with five hyperparameters and five values each, this requires $5^5 =
3{,}125$ evaluations. TPE achieves comparable or better coverage with
a fraction of the budget, which is important here since each trial
requires training a KAN from scratch.

For each candidate configuration, a KAN is trained on the training
split for a fixed 25 full-batch L-BFGS steps with learning rate
$\eta = 1.0$, and evaluated on the validation split. The primary
validation objective is the negative coefficient of determination on
the standardised residual target,
\begin{equation}
    \mathcal{J}_{\mathrm{TPE}}
    = -R^2_{\mathrm{val}}
    := -\left(
        1 -
        \frac{\sum_{i}(\tilde{y}_i - \hat{\tilde{y}}_i)^2}
             {\sum_{i}(\tilde{y}_i - \bar{\tilde{y}})^2}
    \right),
    \label{eq:tpe_objective}
\end{equation}
where the sum is over validation observations. Using $R^2$ on the
standardised residual rather than price-domain MAE avoids the
computational cost of recomputing baseline prices at every trial.
Price-domain metrics (MAE, RMSE, average relative error) are
additionally recorded for reference but are not used as the search
objective.

The hyperparameter search space is defined in
Table~\ref{tab:tpe_space}. The regularisation strength $\lambda$ is
sampled on a log-uniform scale, reflecting that its effect on sparsity
is multiplicative. The entropy regularisation $\lambda_{\mathrm{ent}}$
is constrained to be strictly positive, as it is the primary driver of
edge sparsity and is essential for producing prunable, interpretable
networks. Only shallow two-layer architectures are included,
consistent with the interpretability objective.
 
\begin{table}[ht]
\centering
\caption{TPE hyperparameter search space.}
\label{tab:tpe_space}
\begin{tabular}{lll}
\toprule
\textbf{Hyperparameter} & \textbf{Range / Values}
    & \textbf{Rationale} \\
\midrule
Architecture (width)
    & $[5,4,1]$, $[5,6,1]$, $[5,8,1]$, $[5,10,1]$
    & Shallow only; depth hurts interpretability \\
Spline grid size
    & $\{5,\,7\}$
    & Sufficient for search; refined to 10 later \\
Spline order $k$
    & $\{2,\,3\}$
    & $k=2$ simpler splines; $k=3$ standard \\
Regularisation $\lambda$
    & $\mathrm{LogUniform}(10^{-4},\,5\times10^{-3})$
    & Log-scale reflects multiplicative effect \\
Entropy regularisation $\lambda_{\mathrm{ent}}$
    & $\mathrm{Uniform}(0.5,\,3.0)$
    & Strictly positive to ensure sparsity \\
\bottomrule
\end{tabular}
\end{table}
 
The search runs for 100 evaluations in total. The top $K=10$
configurations ranked by validation $R^2$ are promoted to the
symbolic extraction stage described in
Section~\ref{subsec:symbolic}.
 
\subsection{Symbolic Extraction Pipeline}
\label{subsec:symbolic}
 
Each of the top-$K$ configurations undergoes the symbolic extraction
pipeline described in Algorithm~\ref{alg:symbolic_pipeline}. The
pipeline transforms a trained spline-based KAN into an explicit
closed-form symbolic formula through successive stages of pruning,
grid refinement, and symbolic locking.
 
\begin{algorithm}[ht]
\caption{Symbolic Extraction Pipeline}
\label{alg:symbolic_pipeline}
\begin{algorithmic}[1]
    \Require Architecture $(W, g, k)$, hyperparameters
             $(\lambda, \lambda_{\mathrm{ent}})$,
             symbolic library $\mathcal{L}_{\mathrm{sym}}$,
             datasets $\mathcal{D}_{\mathrm{train}}$,
             $\mathcal{D}_{\mathrm{val}}$
    \Ensure Symbolic formula $f^*$, validation metrics
 
    \State \textbf{Train.}
           Initialise KAN$(W, g, k)$; fit via full-batch L-BFGS,
           30 steps, $\eta = 1.0$,
           regularisation $(\lambda, \lambda_{\mathrm{ent}})$
 
    \State \textbf{Prune.}
           Remove all edges with weight magnitude below
           threshold $\tau = 10^{-2}$
 
    \State \textbf{Post-prune refit.}
           Refit via L-BFGS, 12 steps, $\eta = 0.5$,
           $\lambda = 5\times10^{-4}$,
           $\lambda_{\mathrm{ent}} = 0$
 
    \State \textbf{Grid refinement.}
           Increase spline grid from $g$ to $g' = 10$
 
    \State \textbf{Post-refinement refit.}
           Refit via L-BFGS, 12 steps, $\eta = 0.5$,
           $\lambda = 5\times10^{-4}$,
           $\lambda_{\mathrm{ent}} = 0$
 
    \State \textbf{Symbolic locking.}
           Apply \texttt{auto\_symbolic}$(\mathcal{L}_{\mathrm{sym}})$
           to assign a function from $\mathcal{L}_{\mathrm{sym}}$
           to each active edge
 
    \State \textbf{Symbolic fine-tuning.}
           Refit via L-BFGS, 15 steps, $\eta = 0.5$,
           $\lambda = 10^{-4}$, $\lambda_{\mathrm{ent}} = 0$,
           spline grid frozen (\texttt{update\_grid=False})
 
    \State \textbf{Extract and evaluate.}
           Extract symbolic formula $f^*$ via
           \texttt{symbolic\_formula()};
           evaluate $f^*$ on $\mathcal{D}_{\mathrm{val}}$:
           compute $R^2$, MAE, RMSE, relative error in price units;
           compute combined selection score
           (see Section~\ref{subsec:selection})
\end{algorithmic}
\end{algorithm}
 
The pruning step in Stage~2 encourages sparsity by eliminating weak
interactions, consistent with the sparsification philosophy of the KAN
framework \cite{KANpaper}. The grid refinement in Stage~4 increases
the expressiveness of the remaining splines before symbolic locking,
improving the quality of the symbolic fit. The symbolic fine-tuning in
Stage~7 adjusts the affine coefficients of the locked symbolic
functions with the grid frozen, ensuring the final formula
coefficients are optimised jointly.
 
\subsection{Final Model Selection Criterion}
\label{subsec:selection}
 
For each symbolic candidate produced by
Algorithm~\ref{alg:symbolic_pipeline}, we compute a combined
selection score that balances predictive accuracy against formula
interpretability. Let $R^2_{\mathrm{sym}}$ denote the validation
$R^2$ of the symbolic model after Stage~7, and let
$R^2_{\mathrm{KAN}}$ denote the validation $R^2$ of the pre-symbolic
spline KAN after Stage~1. The combined score is defined as
\begin{equation}
    \mathrm{Score}(f^*)
    = 0.8\,R^2_{\mathrm{sym}} + 0.2\,R^2_{\mathrm{KAN}},
    \label{eq:final_score}
\end{equation}
where both $R^2$ values are computed on the validation set in the
standardised log-ratio space. The weight of $0.8$ on the symbolic
$R^2$ reflects that the primary objective is an accurate symbolic
formula; the weight of $0.2$ on the pre-symbolic $R^2$ ensures that
candidates built on a well-fitted underlying KAN are preferred when
symbolic accuracy is similar. The candidate maximising
\eqref{eq:final_score} is selected as the final model.
 
\subsection{Final Refit and Evaluation Protocol}
\label{subsec:final_refit}

After selecting the best symbolic candidate using only the training
and validation sets, we rebuild the corresponding KAN architecture and
run Algorithm~\ref{alg:symbolic_pipeline} one final time with two
modifications. First, the initial training in Stage~1 is extended to
50 L-BFGS steps to allow better convergence. Second, the evaluation
dataset is switched from the validation set to the strictly held-out
test set, which is used exactly once at this stage.

The final model is evaluated on two held-out samples. The first is the
test set of $1{,}500$ observations from the working sample. The second
is a fully disjoint holdout consisting of the remaining $90{,}000$
rows of the original $100{,}000$-row dataset that were not included in
the working sample. This disjoint holdout provides a stronger
out-of-sample assessment of the symbolic formula's generalisation
capability, since these observations were never used at any stage of
model development.

For both evaluation sets, we report the following metrics in raw price
units after reconstructing prices via \eqref{eq:price_recovery}:
\[
    \mathrm{MAE}
    = \frac{1}{n}\sum_{i=1}^{n}|\Chat_i - P_i|,
    \quad
    \mathrm{MSE}
    = \frac{1}{n}\sum_{i=1}^{n}(\Chat_i - P_i)^2,
    \quad
    \mathrm{RelErr}
    = \frac{1}{n}\sum_{i=1}^{n}
      \frac{|\Chat_i - P_i|}{|P_i|}.
\]
All experiments are implemented in Python using \texttt{numpy},
\texttt{pandas}, \texttt{PyTorch}, \texttt{pykan}, \texttt{hyperopt},
and \texttt{scikit-learn}, with random seed 42 used throughout for
reproducibility.
\section{Numerical Results}
\label{sec:numerics}
 
This section reports the numerical results of the KAN-based symbolic
pricing pipeline described in Section~\ref{sec:methodology}. We
present the hyperparameter search results, the extracted symbolic
formula, out-of-sample performance on the test set and a disjoint
holdout sample, sensitivity analysis with respect to key pricing
inputs, surface plots of the predicted price function, and an
empirical verification of the structural monotonicity properties
established in Section~\ref{sec:theory}.
 
\subsection{Experimental Setup}
\label{subsec:exp_setup}
 
All experiments use a working sample of $10{,}000$ observations drawn
from the full dataset of $100{,}000$ simulated CAT bond prices,
partitioned into training ($70\%$, $7{,}000$ observations),
validation ($15\%$, $1{,}500$ observations), and test ($15\%$,
$1{,}500$ observations) splits. The validation set is used during
hyperparameter selection and symbolic candidate ranking; the test set
is evaluated exactly once after the final model is selected. The
remaining $90{,}000$ observations form a fully disjoint holdout
sample used for out-of-sample generalisation assessment.
 
\subsection{Hyperparameter Search Results}
\label{subsec:hparam_results}

The TPE search ran for 100 evaluations, all of which completed
successfully. Table~\ref{tab:tpe_top10} reports the top 10
configurations ranked by validation $R^2$ on the standardised
log-ratio residual.

\begin{table}[ht]
\centering
\caption{Top 10 configurations from the TPE hyperparameter search,
ranked by validation $R^2$. All configurations use a two-layer
architecture $[5, h, 1]$.}
\label{tab:tpe_top10}
\begin{tabular}{cccccccc}
\toprule
\textbf{Rank} & \textbf{Width} & \textbf{Grid} & $k$
    & $\lambda$ & $\lambda_{\mathrm{ent}}$
    & $R^2_{\mathrm{KAN}}$ & \textbf{RelErr} \\
\midrule
1  & $[5,6,1]$  & 5 & 2 & 0.00181 & 1.190 & 0.783 & 0.00382 \\
2  & $[5,10,1]$ & 5 & 3 & 0.00017 & 1.418 & 0.773 & 0.00439 \\
3  & $[5,6,1]$  & 5 & 2 & 0.00187 & 2.575 & 0.769 & 0.00390 \\
4  & $[5,10,1]$ & 5 & 2 & 0.00215 & 0.849 & 0.765 & 0.00388 \\
5  & $[5,10,1]$ & 5 & 3 & 0.00023 & 1.442 & 0.762 & 0.00406 \\
6  & $[5,6,1]$  & 7 & 3 & 0.00268 & 1.506 & 0.756 & 0.00416 \\
7  & $[5,10,1]$ & 5 & 2 & 0.00081 & 1.653 & 0.752 & 0.00394 \\
8  & $[5,6,1]$  & 5 & 2 & 0.00219 & 1.152 & 0.751 & 0.00395 \\
9  & $[5,6,1]$  & 5 & 2 & 0.00285 & 1.969 & 0.750 & 0.00376 \\
10 & $[5,6,1]$  & 5 & 2 & 0.00163 & 1.352 & 0.749 & 0.00379 \\
\bottomrule
\end{tabular}
\end{table}

The top configurations cluster around the $[5,6,1]$ architecture with
grid size 5, spline order $k=2$, and entropy regularisation in the
range $[1.0, 2.6]$. The validation $R^2$ values range from $0.749$ to
$0.783$ across the top 10, indicating reasonably consistent
performance across architectures at this scale. The relative error of
$0.00376$ at rank 9 suggests that even at the pre-symbolic stage the
KAN is learning a useful correction to the baseline.

\subsection{Symbolic Extraction Results}
\label{subsec:sym_results}

Each of the top-10 configurations from the TPE search was subjected to
the symbolic extraction pipeline (Algorithm~\ref{alg:symbolic_pipeline}).
Table~\ref{tab:sym_candidates} reports the symbolic $R^2$ and
validation MAE for each candidate, ranked by the combined selection
score $\mathrm{Score} = 0.8\,R^2_{\mathrm{sym}} + 0.2\,R^2_{\mathrm{KAN}}$.

\begin{table}[ht]
\centering
\caption{Symbolic extraction results for the top-10 TPE candidates,
sorted by combined selection score
$\mathrm{Score} = 0.8\,R^2_{\mathrm{sym}} + 0.2\,R^2_{\mathrm{KAN}}$
(higher is better).}
\label{tab:sym_candidates}
\begin{tabular}{cccccccc}
\toprule
\textbf{Rank} & \textbf{Width} & \textbf{Grid} & $k$
    & $R^2_{\mathrm{sym}}$ & $R^2_{\mathrm{KAN}}$
    & \textbf{MAE} & \textbf{Score} \\
\midrule
\rowcolor{green!15}
1  & $[5,6,1]$  & 5 & 2 & 0.589 & 0.750 & 3.267 & 0.622 \\
2  & $[5,10,1]$ & 5 & 3 & 0.530 & 0.773 & 3.580 & 0.579 \\
3  & $[5,6,1]$  & 7 & 3 & 0.493 & 0.756 & 3.248 & 0.544 \\
4  & $[5,6,1]$  & 5 & 2 & 0.484 & 0.769 & 3.169 & 0.541 \\
5  & $[5,10,1]$ & 5 & 2 & 0.428 & 0.752 & 3.235 & 0.492 \\
6  & $[5,6,1]$  & 5 & 2 & 0.313 & 0.751 & 3.415 & 0.400 \\
7  & $[5,10,1]$ & 5 & 2 & 0.298 & 0.765 & 2.968 & 0.391 \\
8  & $[5,10,1]$ & 5 & 3 & 0.011 & 0.762 & 3.354 & 0.161 \\
9  & $[5,6,1]$  & 5 & 2 & $-0.000$ & 0.783 & 3.093 & 0.157 \\
10 & $[5,6,1]$  & 5 & 2 & $-0.002$ & 0.749 & 3.023 & 0.148 \\
\bottomrule
\end{tabular}
\end{table}

The final selected model is the rank-1 candidate in
Table~\ref{tab:sym_candidates}, with architecture $[5,6,1]$, grid size 5,
spline order $k=2$, $\lambda = 0.002853$, and
$\lambda_{\mathrm{ent}} = 1.969$. This model achieved the highest combined
selection score and also delivered the strongest symbolic fit among the
top candidates, with $R^2_{\mathrm{sym}} = 0.589$ and
$R^2_{\mathrm{KAN}} = 0.750$. We therefore select it as the best overall
trade-off between predictive accuracy and interpretability. As expected, the
symbolic model is less accurate than the original spline-based KAN in every
case, since restricting each learned edge function to a small symbolic library
improves interpretability at the cost of flexibility.
 
\subsection{Extracted Symbolic Formula}
\label{subsec:formula}
 
The final model is retrained on the full training set and subjected to
the symbolic extraction pipeline with the initial training extended to
50 L-BFGS steps. The extracted symbolic formula for the normalised
log-ratio residual $\tilde{y}$ is
 
\begin{equation}
\begin{aligned}
    \tilde{y} \;=\;
    &- 0.01323\,x_1
    - 0.00243\,x_2
    + 0.09018\,x_3
    + 0.03131\,x_4
    - 0.12786\,x_5 \\
    &- 17.816\;\Phi\!\bigl(
        -0.05365\,x_1
        - 0.74138\,x_2
        + 1.71500\,x_3
        + 2.02317\,x_4
        - 2.20390\,x_5
        + 9.21344
    \bigr) \\
    &+ 17.761,
\end{aligned}
\label{eq:symbolic_formula}
\end{equation}
 
where $x_1,\dots,x_5$ denote the standardised inputs
$(r_0, \lambda, \log D, N, T)$ respectively, and $\Phi(\cdot)$ is the
standard normal CDF. The full predicted price is recovered as
\[
    \Chat =
    (\Pbase + \varepsilon)\,
    \exp\!\bigl(\sigma^{(y)}\,\tilde{y} + \mu^{(y)}\bigr).
\]
 
The formula \eqref{eq:symbolic_formula} consists of a linear term in
the standardised inputs and a single $\Phi$ correction term. This is
notably compact: after the symbolic extraction and pruning pipeline,
the KAN collapsed to a single active $\Phi$ edge in the second layer,
with all remaining edges assigned the identity function. The $\Phi$
term captures the main nonlinear dependence of the residual on the
inputs, and its argument is a linear combination of all five features
with the largest loadings on $x_3 = \widetilde{\log D}$,
$x_4 = \tilde{N}$, and $x_5 = \tilde{T}$. This is financially
interpretable: the trigger survival probability depends most directly
on the threshold $D$, the maturity $T$, and the coupon structure $N$.
 
The complexity score of this formula is $139.0$, substantially lower
than the pre-symbolic KAN and consistent with the goal of extracting a
compact, interpretable expression.
 
\begin{remark}
The low test-set $R^2$ of $0.042$ on the normalised log-ratio does not
contradict the strong price-domain accuracy reported below. The
log-ratio residual $y$ has very small variance by construction --- the
baseline $\Pbase$ already captures most of the price variation --- so
even a formula that explains a small fraction of the residual variance
can achieve low relative pricing error. The relevant performance
measure for the pricing application is price-domain MAE and relative
error, not $R^2$ on the residual.
\end{remark}

\begin{remark}[Monotonicity of the symbolic formula]
The symbolic formula \eqref{eq:symbolic_formula} does not obviously
satisfy the monotonicity properties of
Theorem~\ref{thm:zero_coupon_monotonicity} by inspection.
Differentiating with respect to $\lambda$ yields two competing terms:
a negative linear contribution from the coefficient on $x_2$, and a
positive contribution from the $\Phi$ term via the chain rule. The
net sign depends on the relative magnitudes across the parameter
domain and cannot be determined analytically without further
calculation.

This illustrates an important distinction: the theoretical
monotonicity guarantee of Corollary~\ref{cor:mono_kan_catbond}
applies to KANs satisfying the architectural edge-function conditions,
not to the extracted symbolic formula directly. The symbolic formula
is instead verified empirically in Section~\ref{subsec:sensitivity},
where all three monotonicity directions are confirmed across the
parameter ranges studied.
\end{remark}
 
\subsection{Out-of-Sample Performance}
\label{subsec:oos}
 
\subsubsection{Test set}
 
Table~\ref{tab:test_metrics} reports the performance of the final
symbolic formula \eqref{eq:symbolic_formula} on the held-out test set
of $1{,}500$ observations.
 
\begin{table}[ht]
\centering
\caption{Final symbolic formula performance on the held-out test set
($1{,}500$ observations). Metrics computed in raw price units
(face value $F = 1000$).}
\label{tab:test_metrics}
\begin{tabular}{lc}
\toprule
\textbf{Metric} & \textbf{Value} \\
\midrule
MAE (price units)              & $3.472$ \\
MSE (price units)              & $23.018$ \\
Average relative error         & $0.509\%$ \\
\bottomrule
\end{tabular}
\end{table}
 
The symbolic formula achieves an average relative pricing error of
$0.509\%$ on the test set, corresponding to an absolute MAE of $3.472$
on a bond with face value $F = 1000$. This is comparable to the
pre-symbolic spline KAN performance, confirming that the symbolic
extraction step preserves most of the predictive accuracy while
producing an explicit closed-form formula.
 
\subsubsection{Disjoint holdout}
 
To provide a more stringent out-of-sample assessment, we evaluate the
symbolic formula on the $90{,}000$ observations from the original
dataset that were excluded from the working sample entirely. These
observations were never used at any stage of model development,
including hyperparameter selection, symbolic candidate ranking, and
final refit.
 
Table~\ref{tab:holdout_metrics} reports the holdout evaluation
results.
 
\begin{table}[ht]
\centering
\caption{Symbolic formula performance on the disjoint holdout set
($90{,}000$ observations). Metrics computed in raw price units.}
\label{tab:holdout_metrics}
\begin{tabular}{lc}
\toprule
\textbf{Metric} & \textbf{Value} \\
\midrule
MAE (price units)      & $3.362$ \\
MSE (price units)      & $21.280$ \\
Average relative error & $0.483\%$ \\
\bottomrule
\end{tabular}
\end{table}
 
The holdout performance is slightly better than the test set
performance, with a relative error of $0.483\%$ compared to $0.509\%$.
This indicates that the symbolic formula generalises well beyond the
working sample and does not exhibit higher errors on unseen data. The
consistency between test and holdout metrics provides strong evidence
that the formula is not exhibiting overfitting on the training dataset.

\subsection{Comparison with Feedforward Neural Network}
\label{subsec:ffnn_comparison}

To contextualise the accuracy of the symbolic KAN formula, we train
a feedforward neural network (NN) baseline using the same
baseline-plus-residual setup. Following \cite{CATbondPaper}, the NN
uses four hidden layers of width $[256, 128, 64, 32]$ with ReLU
activations, batch normalisation, dropout rate $0.1$, $L_2$
regularisation $10^{-4}$, and learning rate $10^{-5}$.
Table~\ref{tab:model_comparison} reports results on both held-out
samples.

\begin{table}[ht]
\centering
\caption{Out-of-sample comparison of the symbolic KAN formula,
the pre-symbolic spline KAN, and the feedforward neural network,
evaluated on the test set ($1{,}500$ observations) and the
disjoint holdout ($90{,}000$ observations). Metrics computed in
raw price units ($F = 1000$).}
\label{tab:model_comparison}
\resizebox{\textwidth}{!}{%
\begin{tabular}{lcccccc}
\toprule
& \multicolumn{3}{c}{\textbf{Test Set}}
& \multicolumn{3}{c}{\textbf{Holdout}} \\
\cmidrule(lr){2-4} \cmidrule(lr){5-7}
\textbf{Metric}
    & \textbf{Symbolic KAN}
    & \textbf{Pre-symbolic KAN}
    & \textbf{Feedforward NN}
    & \textbf{Symbolic KAN}
    & \textbf{Pre-symbolic KAN}
    & \textbf{Feedforward NN} \\
\midrule
MAE (price units)
    & $3.472$ & $2.569$ & $2.634$
    & $3.362$ & $2.475$ & $2.424$ \\
MSE (price units)
    & $23.018$ & $18.036$ & $14.894$
    & $21.280$ & $15.725$ & $13.300$ \\
Avg relative error
    & $0.509\%$ & $0.410\%$ & $0.402\%$
    & $0.483\%$ & $0.403\%$ & $0.378\%$ \\
\bottomrule
\end{tabular}%
}
\end{table}

The feedforward NN achieves the lowest pricing error across both
held-out samples, with a relative error of $0.402\%$ on the test
set and $0.378\%$ on the holdout. The pre-symbolic KAN sits between
the symbolic formula and the feedforward NN, achieving $0.410\%$
and $0.403\%$ respectively. The symbolic KAN formula has the
highest error at $0.509\%$ and $0.483\%$, reflecting the
representational constraint introduced by locking edge functions to
the symbolic library $\mathcal{L}_{\mathrm{sym}}$. Notably, the
accuracy gap between the pre-symbolic and symbolic KAN
(${\approx}0.1$ percentage points) is comparable to the gap between
the pre-symbolic KAN and the feedforward NN, suggesting that
symbolic extraction and the choice of model class contribute roughly
equally to the accuracy-interpretability tradeoff. The symbolic KAN
formula remains the only model that produces a closed-form
expression admitting direct financial interpretation through the
$\Phi(\cdot)$ structure inherited from the lognormal survival
approximation in Section~\ref{subsec:baseline}.

\begin{figure}[ht]
    \centering
    \includegraphics[width=\textwidth]{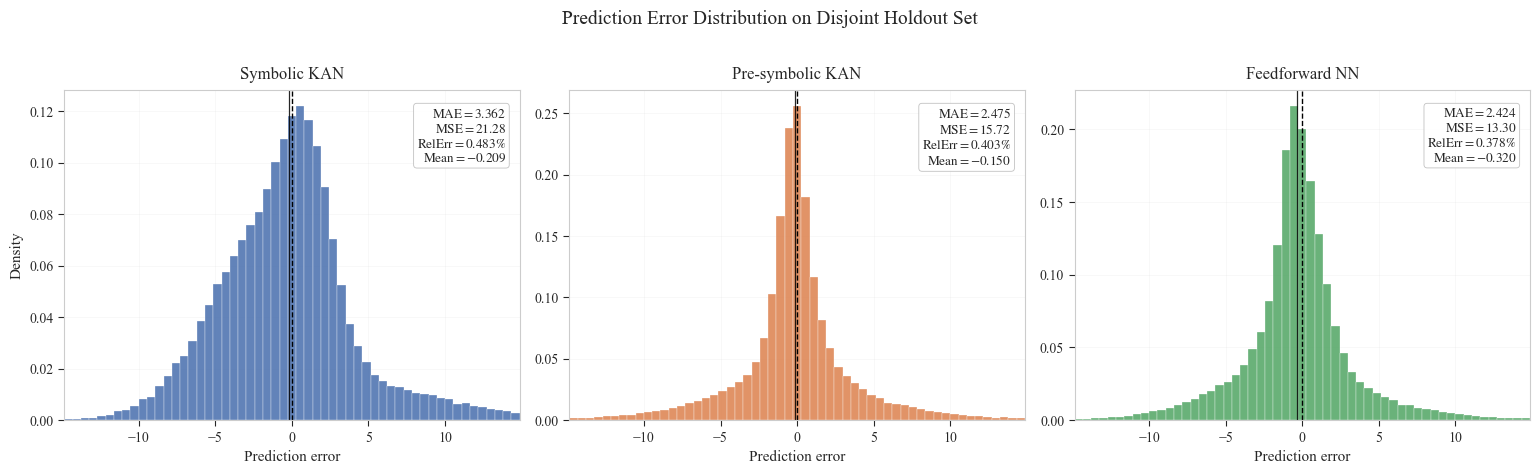}
    \caption{Prediction error distributions for the symbolic KAN
    formula (left), the pre-symbolic spline KAN (centre), and the
    feedforward neural network (right), evaluated on the disjoint
    holdout set of $90{,}000$ observations. All three distributions
    are approximately symmetric and centred near zero, confirming the
    absence of systematic bias. The feedforward NN produces the
    tightest distribution (MAE $= 2.424$, RelErr $= 0.378\%$),
    followed by the pre-symbolic KAN (MAE $= 2.475$, RelErr $=
    0.403\%$), and the symbolic KAN formula (MAE $= 3.362$, RelErr
    $= 0.483\%$). The wider distribution of the symbolic KAN
    reflects the representational constraint introduced by locking
    edge functions to the symbolic library $\mathcal{L}_{\mathrm{sym}}$.}
    \label{fig:error_distribution}
\end{figure}
Figure~\ref{fig:error_distribution} shows the prediction error
distributions for all three models on the disjoint holdout set.
All three distributions are approximately symmetric and centred
near zero, with small negative means ($-0.209$, $-0.150$,
$-0.320$ respectively), indicating a slight but consistent
tendency to underpredict prices across all models. This is likely
attributable to the lognormal moment-matching approximation in the
baseline, which introduces a small downward bias in the survival
probability estimate for certain parameter configurations. The
feedforward NN achieves the tightest error distribution,
consistent with its lower MSE reported in
Table~\ref{tab:model_comparison}. The symbolic KAN formula has a
broader distribution, which is the expected cost of restricting
the model to a compact closed-form expression.
 
\subsection{Sensitivity Analysis}
\label{subsec:sensitivity}

We evaluate the sensitivity of all three models, this includes, the symbolic KAN
formula, the pre-symbolic spline KAN, and the feedforward neural
network --- with respect to the three key pricing inputs: the initial
short rate $r_0$, the catastrophe trigger threshold $D$, and the
catastrophe arrival intensity $\lambda$. In each panel, the remaining
inputs are fixed at baseline values $r_0 = 0.03$, $\lambda = 35$,
$D = 5\times10^9$, $N = 0$ (zero-coupon), and $T = 1$ year.

Figure~\ref{fig:sensitivity_comparison} displays the nine sensitivity
curves in a $3\times3$ layout, with rows corresponding to the three
input variables and columns corresponding to the three models.
\begin{figure}[ht]
    \centering
    \includegraphics[width=\textwidth]{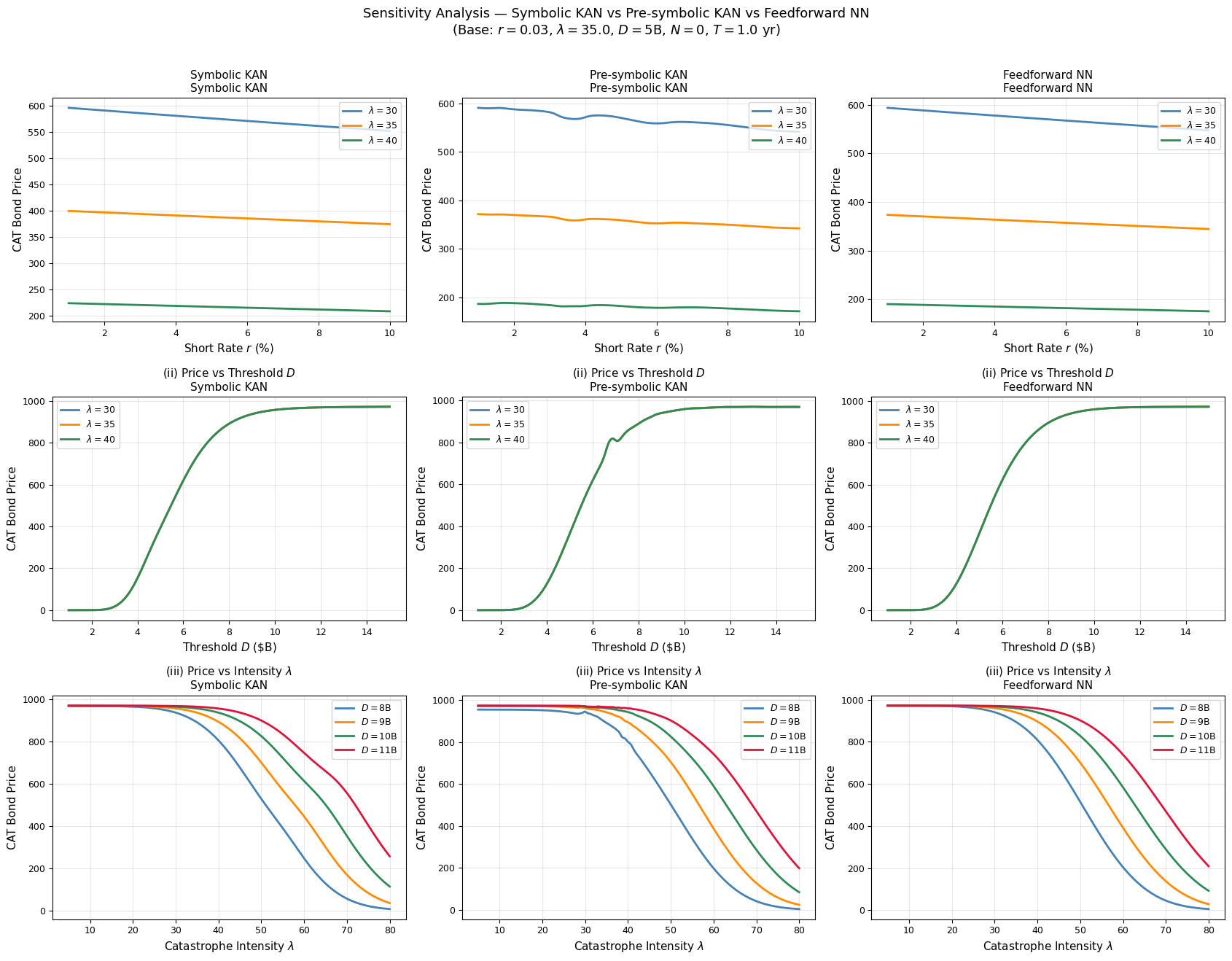}
    \caption{Sensitivity analysis comparing the symbolic KAN formula
    (left), the pre-symbolic spline KAN (centre), and the feedforward
    neural network (right). Row~(i): price vs short rate $r_0$ for
    $\lambda \in \{30,35,40\}$. Row~(ii): price vs trigger threshold
    $D$ for $\lambda \in \{30,35,40\}$. Row~(iii): price vs
    catastrophe intensity $\lambda$ for
    $D \in \{8,9,10,11\}\times10^9$. Base parameters: $N=0$
    (zero-coupon), $T=1$ year.}
    \label{fig:sensitivity_comparison}
\end{figure}

Row~(i) shows that all three models predict lower prices as the short
rate $r_0$ increases, consistent with the discounting effect and with
Theorem~\ref{thm:zero_coupon_monotonicity}. The symbolic KAN and the
feedforward neural network produce smooth, monotone declines. By
contrast, the pre-symbolic KAN exhibits mild local oscillations,
especially for lower values of $\lambda$, resulting in slight
departures from monotonicity. These irregularities are economically
implausible and likely reflect the greater flexibility of the
unconstrained spline representation prior to symbolic extraction.
Across all three models, increasing $\lambda$ shifts the curves
downward, as a higher catastrophe arrival intensity increases the
likelihood of trigger events and hence reduces bond value.

Row~(ii) shows that all three models produce the expected increasing
S-shaped response with respect to the threshold $D$. For low threshold
levels, the trigger is easy to breach and the bond value remains close
to zero; for high threshold levels, breach becomes unlikely and the
price approaches the discounted face value. This pattern is consistent
with the lognormal-survival interpretation used in the baseline model.
The symbolic KAN and the feedforward neural network produce smooth
transition curves, whereas the pre-symbolic KAN displays a small local
kink around the middle of the transition region. Although the overall
shape is still correct, this again indicates a degree of local
instability in the pre-symbolic spline fit.

Row~(iii) shows that bond prices decrease as the catastrophe intensity
$\lambda$ increases, again in line with
Theorem~\ref{thm:zero_coupon_monotonicity}. For each model, larger
threshold values $D$ shift the curves upward, since higher trigger
levels are harder to breach. The three models agree well on the global
shape of this relationship. However, the pre-symbolic KAN again shows
minor waviness in the low-to-middle range of $\lambda$, while the
symbolic KAN and the feedforward neural network remain smooth and
monotone.

Overall, the three models recover the same qualitative financial
comparative statics: prices decrease with $r_0$ and $\lambda$, and
increase with $D$. The feedforward neural network yields the smoothest
curves, while the pre-symbolic KAN is the most locally irregular. The
symbolic KAN preserves the economically meaningful global behaviour of
the spline KAN while removing much of its local oscillation, thereby
acting as a more interpretable and effectively regularised closed-form
approximation.

It is also worth noting that the symbolic extraction step introduces
a small accuracy cost --- the symbolic KAN has a higher MAE and
relative error than the pre-symbolic KAN (Table~\ref{tab:model_comparison})
--- but in return produces sensitivity curves that are \emph{smoother}
than those of the pre-symbolic KAN. The pre-symbolic spline KAN, despite
its better numerical accuracy, exhibits local oscillations and kinks
that are financially implausible. The symbolic formula, by contrast,
produces clean monotone curves throughout: it trades
a small amount of pointwise accuracy for a globally more coherent and
financially interpretable pricing function. This is arguably the more
desirable property for a pricing model used in risk management or
sensitivity analysis, where smooth and monotone behaviour across the
parameter domain matters as much as raw predictive accuracy.

\subsection{Surface Plots}
\label{subsec:surface}

Figure~\ref{fig:surface_comparison} shows the predicted CAT bond
price as a function of threshold $D$ and maturity $T$ for all three
models, under two bond structures: a zero-coupon bond ($N=0$, top
row) and a coupon bond ($N=12$, $c=5\%$, bottom row), with fixed
inputs $r_0 = 0.03$ and $\lambda = 35$.

\begin{figure}[H]
    \centering
    \includegraphics[width=\textwidth]{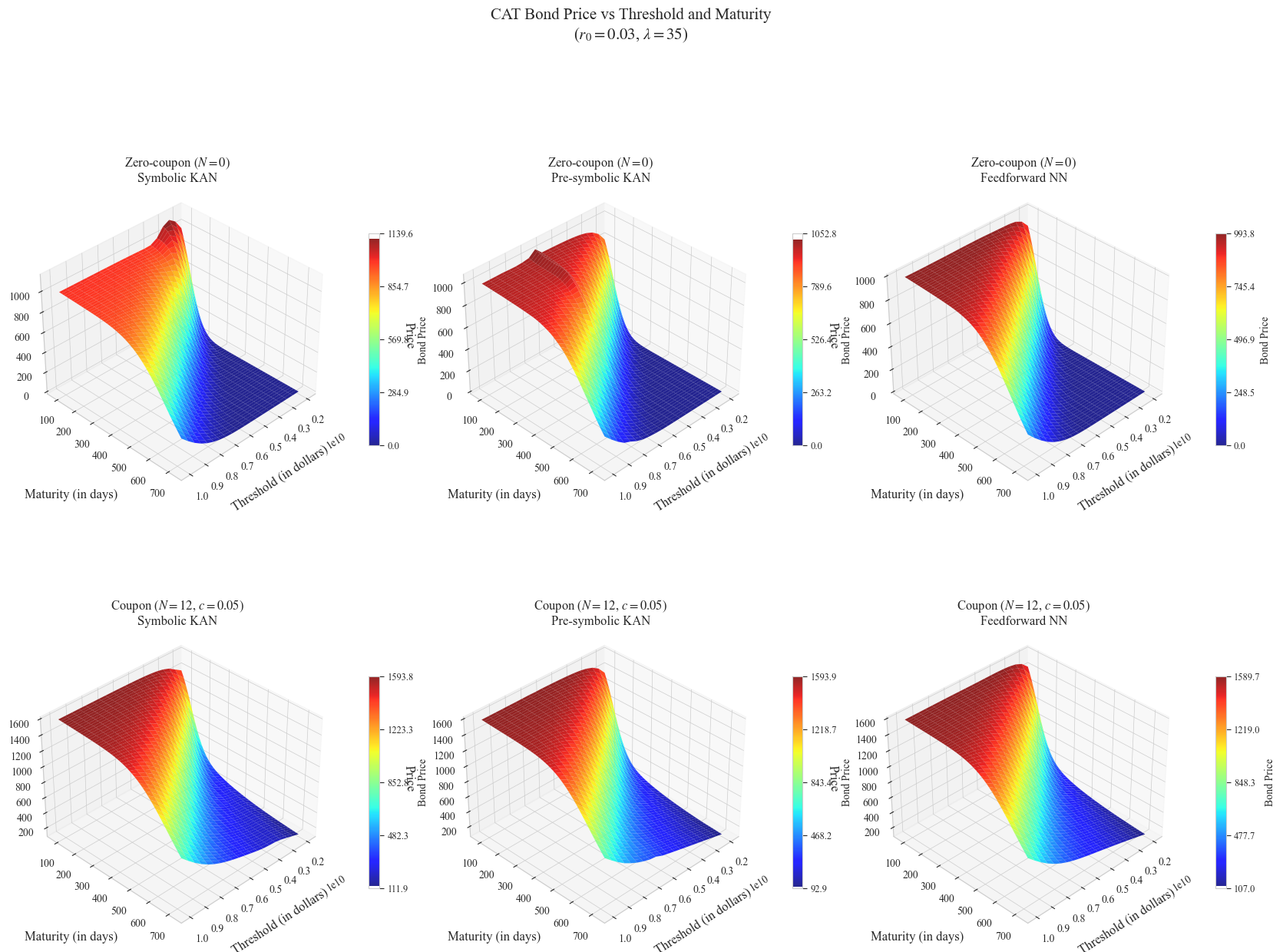}
    \caption{CAT bond price as a function of trigger threshold $D$
    (in dollars) and maturity $T$ (in days), predicted by the
    symbolic KAN formula (left), the pre-symbolic spline KAN
    (centre), and the feedforward neural network (right). Top row:
    zero-coupon bond ($N=0$). Bottom row: coupon bond ($N=12$,
    $c=5\%$). Fixed inputs: $r_0=0.03$, $\lambda=35$. The viewing
    angle follows the convention of Burnecki and Kukla
    \cite{BurneckiKukla}.}
    \label{fig:surface_comparison}
\end{figure}

Across both bond structures, all three models produce surfaces with
the expected qualitative behaviour: prices increase with the trigger
threshold $D$ and decrease with maturity $T$. A higher threshold
makes the catastrophe trigger harder to breach, while a longer
maturity increases both the time exposure to catastrophe risk and
the discounting effect.

Comparing the two rows, the coupon bond prices are systematically
higher than the zero-coupon prices at the same parameter values,
reflecting the additional coupon cash flows that investors receive
at each payment date. The shape of the transition region along the
$D$ axis is similar in both rows, confirming that the S-shaped
survival probability structure is preserved regardless of bond type.

The feedforward NN produces the smoothest surfaces in both rows and
remains within the face value $F=1000$. The symbolic KAN and
pre-symbolic KAN surfaces display a sharper transition region,
consistent with the stronger nonlinear structure seen in the
sensitivity plots. A small anomaly is visible in the symbolic KAN
and pre-symbolic KAN surfaces at low threshold values, where
predicted prices slightly exceed $F=1000$. This is an extrapolation
artefact: the surface extends below the lower end of the training
range for $D$, and is therefore confined to the boundary of the
plotting region rather than the main in-sample domain. Within the
training domain, all three models remain broadly consistent across
both bond structures.

\section{Conclusion}
\label{sec:conclusion}

This paper has studied the use of Kolmogorov--Arnold Networks as
interpretable surrogates for CAT bond pricing under a compound Poisson
loss model with lognormal severities and a Vasicek interest rate
process. The central contribution is a baseline-plus-residual pipeline
in which a KAN learns only the deviation from a closed-form lognormal
baseline price, reducing the learning target to a small, smooth
residual that is substantially easier to represent symbolically. The
extracted formula \eqref{eq:symbolic_formula} consists of a single
$\Phi$ correction applied to a linear combination of standardised
inputs, achieving an average relative pricing error of $0.483\%$ on a
fully disjoint holdout of $90{,}000$ simulated prices never seen
during model development.

Beyond predictive accuracy, the paper develops a theoretical
foundation for structure-aware KAN design in this financial setting.
We proved that the true CAT bond price is nonincreasing in the
catastrophe arrival intensity $\lambda$ and the initial short rate
$r_0$, and nondecreasing in the trigger threshold $D$, extending the
continuity analysis of \cite{CATbondPaper} to directional comparative
statics. We derived sufficient conditions on KAN edge functions under
which the learned model inherits these monotonicities, and showed that
a monotonicity-constrained training objective based on exterior
penalisation converges to the constrained problem as the penalty
weights grow. The extracted symbolic formula satisfies all three
comparative statics empirically across the parameter ranges studied,
which the pre-symbolic spline KAN does not.

The main limitation of the approach is the accuracy gap introduced by
symbolic extraction: the symbolic formula achieves $0.483\%$ relative
error compared to $0.403\%$ for the pre-symbolic KAN and $0.391\%$
for the feedforward neural network on the disjoint holdout. This gap
is the direct cost of restricting edge functions to the small library
$\mathcal{L}_{\mathrm{sym}}$. A second limitation is that the
lognormal moment-matching baseline introduces a small systematic
downward bias in the survival probability, which propagates to a
slight underprediction tendency common to all three models.

Several directions remain open. First, the symbolic library could be
enriched with problem-motivated functions, such as $\Phi^{-1}$ or
additional polynomial terms, to reduce the accuracy gap without
sacrificing interpretability. Second, the monotonicity-constrained
training formulation of Section~\ref{subsec:constrained_training}
could be implemented directly during the KAN training loop and its
empirical effect on symbolic extraction quality evaluated. Third, the
framework could be extended to richer stochastic interest rate models,
such as the CIR model \cite{NowakRomaniuk2018}, or to
multi-trigger CAT bond structures. Finally, calibrating the model to
real market data and assessing out-of-sample performance on observed
CAT bond prices would provide a more direct test of practical
usefulness.

Overall, our results suggest that symbolic KAN surrogates occupy a
useful position in the accuracy-interpretability tradeoff for CAT
bond valuation: they sacrifice a small amount of predictive accuracy
relative to black-box neural networks, but in return produce a compact
closed-form pricing formula that can be evaluated instantly, inspected
for financial plausibility, and analysed for structural properties
that a black-box model cannot offer.


\bibliographystyle{plain}
\bibliography{references}

\end{document}